\documentclass[11pt]{llncs}

\usepackage[margin=1in]{geometry}
\usepackage{url}
\usepackage{color}
\usepackage{times}
\usepackage{latexsym}
\usepackage{graphicx}
\usepackage{amsmath}
\usepackage{epsfig}
\usepackage{amssymb}
\usepackage{xspace}
\usepackage[ruled,noend]{algorithm2e}
\usepackage{subfigure}
\usepackage{verbatim}
\usepackage{ulem} \normalem 

\usepackage{booktabs}
\usepackage{multirow}
\usepackage{multicol}
\usepackage{colortbl}
\usepackage{rotating}

\newcommand{\myboldmath}{}
\newcommand{\defn}[1]           {{\textit{\textbf{\myboldmath #1}}}}

\newcommand{\parhead}[1]{\textbf{#1.}\xspace}

\newtheorem{observation}{Observation}

\newcommand{\VMA}{{$(\cdot,\cdot)$-VMA}\xspace}

\renewcommand{\P}{{\pi}\xspace}
\renewcommand{\l}{{\ell}\xspace}
\newcommand{\alg}{{ALG}\xspace}
\newcommand{\opt}{{OPT}\xspace}

\newcommand{\CMVMA}{{$(C,m)$-VMA}\xspace}
\newcommand{\MVMA}{{$(\cdot,m)$-VMA}\xspace}
\newcommand{\CVMA}{{$(C,\cdot)$-VMA}\xspace}
\renewcommand{\L}{{\ell(D)}\xspace}
\newcommand{\OPTBP}{\ensuremath{\overline{m}}\xspace}

\newcommand{\remove}[1]{}


\newcommand{\lin}[1]{\textcolor{black}{#1}}

\newcommand{\jordi}[1]{\textcolor{black}{#1}}

\hyphenation{spe-ci-fi-ca-tions sche-du-ling com-pe-ti-tive}

\begin{document}

\markboth{J. Arjona Aroca, A. Fern\'andez Anta, M. A. Mosteiro, C. Thraves and L. Wang}{Power-efficient Assignment of Virtual Machines to Physical Machines}

\title{Power-efficient Assignment of Virtual Machines to Physical Machines
\thanks{This work has been supported in part by the 
Comunidad de Madrid grant S2009TIC-1692, 
the MINECO grant TEC2011-29688-C02-01, 
the National Natural Science Foundation of China grant 61020106002, 
the MICINN grant Juan de la Cierva,
the National Science Foundation (CCF-0937829, CCF-1114930), 
and Kean University UFRI grant.}
}

\author{Jordi Arjona Aroca\inst{1,2} \and Antonio Fern\'andez Anta\inst{1}\and Miguel A. Mosteiro\inst{3,4}\and\\ Christopher Thraves\inst{4} \and Lin Wang\inst{5,6}}

\institute{Institute IMDEA Networks, Madrid, Spain. \email{\{jorge.arjona,antonio.fernandez\}@imdea.org}
\and
Universidad Carlos III de Madrid, Madrid, Spain.
\and
Department of Computer Science, Kean University, Union, NJ, USA. \email{mmosteir@kean.edu}
\and
CNRS-LAAS and Univ. Toulouse - LAAS, Tolouse, France. \email{cbthraves@gsyc.es}
\and
Institute of Computing Technology, Chinese Academy of Sciences, Beijing, China. \email{wanglin@ict.ac.cn}
\and
University of Chinese Academy of Sciences, Beijing, China.
}


\maketitle

\begin{abstract}
Motivated by current trends in cloud computing, we study a version of the generalized assignment problem where a set of virtual processors has to be implemented by a set of \emph{identical} processors. 
For literature consistency, we say that a set of virtual machines (VMs) is assigned to a set of physical machines (PMs). The optimization criteria is to minimize the power consumed by all the PMs. We term the problem Virtual Machine Assignment (VMA). Crucial differences with previous work include a variable number of PMs, that each VM must be assigned to exactly one PM (i.e., VMs cannot be implemented fractionally), and a minimum power consumption for each active PM. 
Such infrastructure may be strictly constrained in the number of PMs or in the PMs' capacity, depending on how costly (in terms of power consumption) it is to add a new PM to the system or to heavily load some of the existing PMs.
Low usage or ample budget yields models where PM capacity and/or the number of PMs may be assumed unbounded for all practical purposes.
We study four VMA problems depending on whether the capacity or the number of PMs is bounded or not.
Specifically, we study 
hardness
and online competitiveness for a variety of cases.
To the best of our knowledge, this is the first comprehensive study of the VMA problem for this cost function.
\end{abstract}

\keywords{Cloud computing, generalized assignment, scheduling, load balancing.}

\newpage

\section{Introduction}

The current pace of technology developments,
and the continuous change in business requirements,
may rapidly yield a given
proprietary computational platform obsolete, oversized, or insufficient.
Thus, outsourcing has recently become a popular approach to obtain computational services without incurring in amortization costs.
Furthermore, in order to attain flexibility, such service is usually virtualized, so that the user may tune the computational platform to its particular needs.
Users of such service need not to be aware of the particular implementation, 
they only need to specify the virtual machine they want to use.
This conceptual approach to outsourced computing has been termed \emph{cloud computing}, in reference to the cloud symbol used as an abstraction of a complex infrastructure in system diagrams.
Current examples of cloud computing providers include Amazon Web Services~\cite{ec2}, Rackspace~\cite{rackspace}, and Citrix~\cite{citrix}.

Depending on what the specific service provided is, the cloud computing model comes in different flavors, such as \emph{infrastructure as a service}, \emph{platform as a service}, \emph{storage as a service}, etc. In each of these models, the user may choose specific parameters of the computational resources provided. For instance, processing power, memory size, communication bandwidth, etc. Thus, in a cloud-computing service platform, various \defn{virtual machines (VM)} with user-defined specifications must be implemented by, or \defn{assigned to}\footnote{The cloud-computing literature use instead the term \emph{placement}. We choose here the term assignment for consistency with the literature on general assignment problems.},
 various \defn{physical machines (PM)}\footnote{We choose the notation VM and PM for simplicity and consistency, but notice that our study applies to any computational resource assignment problem, as long as the minimization function is the one modeled here.}.
Furthermore, such a platform must be scalable, allowing to add more PMs, should the business growth require such expansion. 
In this work, we call this problem the \defn{Virtual Machine Assignment (VMA)} problem. 

The optimization criteria for VMA depends on what the particular objective function sought is.
From the previous discussion, it can be seen that, underlying VMA, there is some form of bin-packing problem. However, in VMA the number of PMs  (i.e., bins for bin packing) may be increased if needed. \lin{Since CPU is generally the dominant power consumer in a server \cite{eenergy}, VMA is usually carried out according to CPU workloads. With only the static power consumption of servers considered, previous work related to VMA has focused on minimizing the number of active PMs (cf.~\cite{BRKplacement} and the references therein) in order to minimize the total static energy consumption. This is commonly known as VM consolidation \cite{Nathuji_Schwan-2007,Kusic_Kephart-2009}. However, despite the static power, the dynamic power consumption of a server, which has been shown to be superlinear on the \defn{load} of a given computational resource \cite{PruhsSpeedScaling,PruhsPrimalDual}, is also significant and cannot be ignored. Since the definition of load is not precise, we borrow the definition in \cite{eenergy} and define the load of a server as the amount of active cycles per second a task requires, an absolute metric independent of the operating frequency or the number of cores of a PM. The superlinearity property of the dynamic power consumption is also confirmed by the results in \cite{eenergy}. As a result, when taking into account both parts of power consumption, the use of extra PMs may be more efficient energy-wise than a minimum number of heavily-loaded PMs. This inconsistency with the literature in VM consolidation has been supported by the results presented in \cite{eenergy} and, hence, we claim that the way consolidation has been traditionally performed has to be reconsidered. In this work, we combine both power-consumption factors and explore the most energy-efficient way for VMA. That is, for some parameters $\alpha>1$ and $b>0$, we seek to minimize the sum of the $\alpha$ powers of the PMs loads \emph{plus} the fixed cost $b$ of using each PM.}

Physical resources are physically constrained.
A PMs infrastructure may be strictly constrained in the number of PMs or in the PMs \jordi{CPU capacity}.
However, if usage patterns indicate that the PMs will always be loaded well below their capacity, it may be assumed that the capacity is unlimited. Likewise, if the power budget is very big, the number of PMs may be assumed unconstrained for all practical purposes. 
These cases yield 4 VMA subproblems, depending on whether the capacity and the number of PMs is limited or not.
We introduce these parameters denoting the problem as {\bf({\em C,m})-VMA}, where $C$ is the PM \jordi{CPU capacity}, $m$ is the maximum number of PMs, and each of these parameters is replaced by a dot if unbounded. 

In this work, we study the hardness and online competitiveness of the VMA problem.
Specifically, we show that VMA is NP-hard \emph{in the strong sense} (in particular, we observe that $(C, m)$-VMA is strongly NP-complete). Thus, VMA problems do not have a fully polynomial time approximation scheme (FPTAS). Nevertheless, using previous results derived for more general objective functions, we notice that $(\cdot,m)$- and $(\cdot,\cdot)$-VMA have a polynomial time approximation scheme (PTAS). We also show various lower and upper bounds on the offline approximation and the online competitiveness of VMA. Rather than attempting to obtain tight bounds for particular instances of the parameters of the problem ($C,m,\alpha,b$) we focus on obtaining \emph{general bounds}, whose parameters can be instantiated for the specific application. The bounds obtained show interesting trade-offs between the PM capacity and the fixed cost of adding a new PM to the system.
To the best of our knowledge, this is the first VMA study that is focused on power consumption.

\parhead{Roadmap} The paper is organized as follows. 
In what remains of this section, we define formally the \VMA problem, we overview the related work, and we describe our results in detail. Section~\ref{sec:prelim} includes some preliminary results that will be used throughout the paper. The offline and online analyses are included in Section~\ref{sec:alloffline} and~\ref{sec:comp} respectively. \lin{Section~\ref{sec:discussion} discusses some practical issues and provides some useful insights regarding real implementation.} For succinctness, many of the
proofs are left to the full version of this paper in \cite{fullThisPaper}.

    \subsection{Problem Definition}
    \label{sec:problem}
We describe the \VMA problem now. Given a set $S=\{s_1, \ldots , s_m\}$ of $m>1$ identical physical machines (PMs) of capacity $C$; rational numbers $\mu$, $\alpha$ and $b$, where $\mu>0$, $\alpha >1$ and $b>0$; a set $D=\{d_1, \ldots, d_n\}$ of $n$ virtual machines and a function $\ell: D \rightarrow \mathbb{R}$ that gives the CPU load each virtual machine incurs\footnote{For convenience, 
 we overload the function $\l(\cdot)$ to be applied over sets of virtual machines, so that for any set $A \subseteq D, \l(A)=\sum_{d_j \in A} \l(d_j)$.},
we aim to obtain a partition $\P=\{A_1, \ldots, A_m\}$ of $D$, such that $\l(A_i) \leq C$, for all $i$.
Our objective will be then minimizing the power consumption given by the function
\begin{equation}
\label{eq:model}
P(\P)=\sum_{i \in [1,m]:A_i \neq \emptyset} \Bigg( \mu \Big(\sum_{d_j \in A_i} \l(d_j) \Big)^\alpha + b \Bigg).
\end{equation}
Let us define the function $f(\cdot)$, such that $f(x)=0$ if $x=0$ and $f(x)=\mu x^\alpha + b$ otherwise. Then, the objective function is to minimize
$
P(\P)=\sum_{i=1}^m f(\l(A_i)).
$
The parameter $\mu$ is used for consistency with the literature. For clarity we will consider $\mu=1$ in the rest of the paper.
All the results presented apply for other values of $\mu$.

We also study several special cases of the VMA problem, namely \CMVMA, \CVMA, \MVMA and \VMA. \CMVMA refers to the 
case where both the number of available PMs and its capacity are fixed.
\VMA, where $(\cdot)$ denotes unboundedness, refers to the case where both the number of available PMs and its capacity are unbounded (i.e., $C$ is larger than the total load of the VMs that can ever be in the system at any time, or $m$ is larger than the number of VMs that can ever be in the system at any time). \CVMA and \MVMA are the cases where the number of available PMs and their capacity is unbounded, respectively.

    \subsection{Related Work}
    \label{sec:relwork}

To the best of our knowledge, previous work on VMA has been only experimental~\cite{cloudconsolidation,KDRassign,MFDcloud,cloudheuristics} or has focused on different cost functions~\cite{DBLP:journals/jacm/CodyC76,DBLP:conf/soda/AlonAWY97,BRKplacement,DBLP:journals/siamcomp/ChandraW75}.
First, we provide an overview of previous theoretical work for related assignment problems (storage allocation, scheduling, network design, etc.). The cost functions considered in that work resemble or generalize the power cost function under consideration here. Secondly, we overview related experimental work.


Chandra and Wong \cite{DBLP:journals/siamcomp/ChandraW75}, and Cody and Coffman \cite{DBLP:journals/jacm/CodyC76} study a problem for storage allocation that is a variant of \MVMA with $b=0$ and $\alpha=2$. Hence, this problem tries to minimize the sum of the squares of the machine-load vector for a fixed number of machines. They study the offline version of the problem and provide algorithms with constant approximation ratio. A significant leap was taken by Alon et al.~\cite{DBLP:conf/soda/AlonAWY97}, since they present a PTAS for the problem of minimizing the $L_p$ norm of the
load vector, for any  $p\geq 1$. This problem has the previous one as special case, and is also a variant of the \MVMA problem when $p=\alpha$ and $b=0$. Similarly, Alon et al~\cite{alon1998approximation} extended this work for a more general set of functions, that include $f(\cdot)$ as defined above.
Hence, their results can be directly applied in the \MVMA problem. Later, Epstein et al~\cite{epstein2004} extended~\cite{alon1998approximation} further for the uniformly related machines case. We will use these results in Section \ref{sec:alloffline} in the analysis of the offline case of \MVMA and \VMA.

Bansal, Chan, and Pruhs minimize arbitrary power functions for speed scaling in job scheduling~\cite{PruhsSpeedScaling}.
The problem is to schedule the execution of $n$ computational jobs on a \emph{single} processor, whose speed may vary within a countable collection of intervals. Each job has a release time, a processing work to be done, a weight characterizing its importance, and its execution can be suspended and restarted later without penalty. A scheduler algorithm must specify, for each time, a job to execute and a speed for the processor. The goal is to minimize the weighted sum of the flow times over all jobs plus the energy consumption, where the flow time of a job is the time elapsed from release to completion and the energy consumption is given by $s^\alpha$ where $s$ is the processor speed and $\alpha>1$ is some constant.
For the online algorithm \emph{shortest remaining processing time first}, the authors prove a $(3+\epsilon)$ competitive ratio for the objective of total weighted flow plus energy. Whereas for the online algorithm \emph{highest density first (HDF)}, where the density of a job is its weight-to-work ratio, they prove a $(2+\epsilon)$ competitive ratio for the objective of fractional weighted flow plus energy.

Recently, Im, Moseley, and Pruhs studied online scheduling for general cost functions of the flow time, with the only restriction that such function is non-decreasing~\cite{PruhsGralCostFunc}. In their model, a collection of jobs, each characterized by a release time, a processing work, and a weight, must be processed by a \emph{single} server whose speed is variable. A job can be suspended and restarted later without penalty.
The authors show that HDF is $(2+\epsilon)$-speed $O(1)$-competitive against the optimal algorithm on a unit speed-processor, for all non-decreasing cost functions of the flow time.
Furthermore, they also show that this ratio cannot be improved significantly proving impossibility results if the cost function is not uniform among jobs or the speed cannot be significantly increased.

A generalization of the above problem is studied by Gupta, Krishnaswamy, and Pruhs in~\cite{PruhsPrimalDual}. The question addressed is how to assign jobs, \emph{possibly fractionally}, to unrelated parallel machines in an online fashion in order to minimize the sum of the $\alpha$-powers of the machine loads plus the assignment costs. Upon arrival of a job, the algorithm learns the increase on the load and the cost of assigning a unit of such job to a machine. Jobs cannot be suspended and/or reassigned. The authors model a greedy algorithm that assigns a job so that the cost is minimized as solving a mathematical program with constraints arriving online. They show a competitive ratio of $\alpha^\alpha$ with respect to the solution of the dual program which is a lower bound for the optimal.
They also show how to adapt the algorithm to integral assignments with a $O(\alpha)^\alpha$ competitive ratio, which applies directly to our \MVMA problem.
References to previous work on the particular case of minimizing energy with deadlines can be found in this paper.


Similar cost functions have been considered for the minimum cost network-design problem. In this problem, packets have to be routed through a (possibly multihop) network of speed scalable routers. There is a cost associated to assigning a packet to a link and to the speed or load of the router. 
The goal is to route all packets minimizing the aggregated cost. 
In~\cite{DBLP:journals/ton/AndrewsAZZ12} and \cite{DBLP:conf/focs/AndrewsAZ10} the authors show offline algorithms for this problem with undirected graph and homogeneous link cost functions that achieve polynomial and poly-logarithmic approximation, respectively. The cost function is the $\alpha$-th power of the link load plus a link assignment cost, for any constant $\alpha > 1$. 
The same problem and cost function is studied in \cite{PruhsPrimalDual}.
%
%
Bansal \emph{et al.} \cite{DBLP:conf/medalg/BansalGKNPS12} study a minimum-cost virtual circuit multicast routing problem with speed scalable links. They give a polynomial-time $O(\alpha)$-approximation offline algorithm and a 
polylog-competitive online algorithm, both for the case with homogeneous power functions. They also show that the problem is APX-hard in the case with heterogeneous power functions and there is no polylog-approximation when the graph is directed. \lin{Recently, Antoniadis \emph{et al.} \cite{AntoniadisVCR} improved the results by providing a simple combinatorial algorithm that is $O(\log^{\alpha}n)$-approximate, from which we can construct an $\widetilde{O}(\log^{3\alpha + 1}n)$-competitive online algorithm.}
The \MVMA problem can be seen as a especial case of the problem considered in these papers in which there are only two nodes, source and destination, and $m$ parallel links connecting them.

To the best of our knowledge, the problem of minimizing the power consumption (given in~Eq.\ref{eq:model}) with capacity constraints (i.e., the \CMVMA and \CVMA problems) has received very limited attention, in the realm of both VMA and network design, although the approaches in \cite{DBLP:conf/focs/AndrewsAZ10} and \cite{DBLP:conf/medalg/BansalGKNPS12} are related to or based on the solutions for the capacitated network-design problem \cite{DBLP:conf/ipco/ChakrabartyCKK11}.



The experimental work related to VMA is vast and its detailed overview is out of the scope of this paper.
Some of this work does not minimize energy~\cite{CSPChMapReduceCloud,MKMfaultTolerant,MSvector} or it applies to a model different than ours (VM migration~\cite{NDMautonomic,SKZconsolidation}, knowledge of future load~\cite{MNTevolutionary,SKZconsolidation}, feasibility of allocation~\cite{BRKplacement}, multilevel architecture~\cite{MFDcloud,NDMautonomic,JBcloud}, interconnected VMs~\cite{interVMs}, etc.). On the other hand, some of the experimental work where minimization of energy is evaluated focus on a more restrictive cost function~\cite{VLRPPGappCentric,JBcloud,XFmultiobjective}.


In~\cite{SKZconsolidation}, the authors focus on an energy-efficient VM placement problem with two requirements: CPU and disk. These requirements are assumed to change dynamically and the goal is to consolidate loads among servers, possibly using migration at no cost.
In our model VMs assignment is based on a CPU requirement that does not change and migration is not allowed. Should any other resource be the dominating energy cost, the same results apply for that requirement. Also, if loads change and migration is free, an offline algorithm can be used each time that a load changes or a new VM arrives.
In~\cite{SKZconsolidation} it is shown experimentally that energy-efficient VMA does not merely reduce to a packing problem. That is, to minimize the number of PMs used even if their load is close to their maximum capacity. For our model, we show here that the optimal load of a given server is a function only of the fixed cost of being active ($b$) and the exponential rate of power increase on the load ($\alpha$). That is, the optimal load is not related to the maximum capacity of a PM.

\color{black}

    \subsection{Our Results}
    \label{sec:results}

\renewcommand{\tabcolsep}{1pt}
\begin{table}[t]
\centering
\small

\begin{tabular}{|l|cc|cc|}
\hline 
\rule{0pt}{3ex}
\begin{tabular}{l}VMA\\subprob.\end{tabular} & \multicolumn{2}{c|}{$x^*<C$} & \multicolumn{2}{c|}{$x^*\geq C$} \\
[.05in]
\hline 
\hline 
\rule{0pt}{3ex}
\multirow{2}{*}{\begin{tabular}{l}$(C,\cdot)$\\offline \end{tabular}}&
$\rho\geq\frac{3}{2}\frac{\alpha-1+(2/3)^\alpha}{\alpha}$ &
\cellcolor[gray]{.8} $\rho \geq\frac{11}{9}$ &
$\rho\geq\frac{3}{2}\frac{\alpha-1+(2/3)^\alpha}{\alpha}$ &
\cellcolor[gray]{.8} $\rho \geq\frac{11}{9}$ \\
[.05in]
\cline{2-5} 
\rule{0pt}{3ex}
&
$\rho<\frac{\OPTBP}{m^*}\left(1+\epsilon+\frac{1}{\alpha-1}+\frac{1}{\OPTBP}\right)$&
\cellcolor[gray]{.8}$\rho<\frac{\OPTBP}{m^*}\left(\frac{3}{2}+\epsilon+\frac{1}{\OPTBP}\right)$&

$\rho<1+\epsilon+\frac{C^\alpha}{b}+\frac{1}{\OPTBP}$&
\cellcolor[gray]{.8} $\rho<\frac{3}{2}+\epsilon+\frac{1}{\OPTBP}$\\
[.05in]
\hline 
\hline 
\rule{0pt}{3ex}
\multirow{2}{*}{\begin{tabular}{l}$(C,\cdot)$\\online \end{tabular}}&
$\rho\geq\frac{(3/2)2^\alpha-1}{2^\alpha-1}$&
\cellcolor[gray]{.8} $\rho\geq\frac{11}{7}$&
$\rho\geq\frac{C^\alpha+2b}{b+\max\{C^\alpha,2(C/2)^\alpha+b\}}$&
\cellcolor[gray]{.8} $\rho\geq\frac{20}{17}$\\
[.05in]
\cline{2-5} 
\rule{0pt}{3ex}
&
\begin{tabular}{c}\rule{0pt}{3ex}$\rho=1$ if $D_s=\emptyset$, else\\$\rho\leq\left(1-\frac{1}{\alpha}\left(1-\frac{1}{2^\alpha}\right)\right)\left(2+\frac{x^*}{\ell(D_s)}\right)$\\[.05in]\end{tabular}&
\cellcolor[gray]{.8} $\rho\leq\frac{17}{12}\left(1+\frac{1}{2\ell(D_s)}\right)$&
$\rho\leq\frac{2b}{C}\left( 1 + \frac{1}{(\alpha-1)2^{\alpha}} \right)\left(2 + \frac{C}{\ell(D)}\right)$&
\cellcolor[gray]{.8} $\rho\leq \frac{17}{2}\left(1+\frac{1}{2\ell(D)}\right)$\\
[.05in]
\hline 
\hline 
\rule{0pt}{3ex}
\multirow{2}{*}{\begin{tabular}{l}$(C,m)$\\online \end{tabular}}&
$\rho\geq\frac{(3/2)2^\alpha-1}{2^\alpha-1}$&
\cellcolor[gray]{.8} $\rho\geq\frac{11}{7}$&
$\rho\geq\frac{C^\alpha+2b}{b+\max\{C^\alpha,2(C/2)^\alpha+b\}}$&
\cellcolor[gray]{.8} $\rho\geq\frac{20}{17}$\\
[.2in]
\hline 
\hline 
\rule{0pt}{3ex}
\multirow{2}{*}{\begin{tabular}{l}$(\cdot,\cdot)$\\online \end{tabular}}&
$\rho\geq\frac{(3/2)2^\alpha-1}{2^\alpha-1}$&
\cellcolor[gray]{.8} $\rho\geq\frac{11}{7}$&
\multicolumn{2}{c|}{\multirow{2}{*}{not applicable}}\\
[.05in]
\cline{2-3} 
\rule{0pt}{3ex}
&
\begin{tabular}{c}\rule{0pt}{3ex}$\rho=1$ if $D_s=\emptyset$, else\\$\rho\leq\left(1-\frac{1}{\alpha}\left(1-\frac{1}{2^\alpha}\right)\right)\left(2+\frac{x^*}{\ell(D_s)}\right)$\\[.05in]\end{tabular}&
\cellcolor[gray]{.8} $\rho\leq\frac{17}{12}\left(1+\frac{1}{2\ell(D_s)}\right)$&
&
\\
[.05in]
\hline 
\hline 
\rule{0pt}{3ex}
\begin{tabular}{l}$(\cdot,m)$\\online \end{tabular}&
$\rho\geq\max\{\frac{(3/2)2^\alpha-1}{2^\alpha-1},\frac{3^\alpha}{2^{\alpha+2}+\epsilon}\}$&
\cellcolor[gray]{.8} $\rho\geq\frac{11}{7}$&
\multicolumn{2}{c|}{not applicable}\\
[.05in]
\cline{2-3} 
\rule{0pt}{3ex}
&
$\rho\leq O(\alpha)^\alpha$ In~\cite{PruhsPrimalDual}&
\cellcolor[gray]{.8}&
&
\\
[.05in]
\hline 
\hline 
\rule{0pt}{3ex}
\multirow{2}{*}{\begin{tabular}{l}\rule{0pt}{3ex}$(\cdot,2)$\\online\\[.05in] \end{tabular}}&
$\rho\geq\max\{\frac{3^\alpha}{2^{\alpha+1}},\frac{(3/2)2^\alpha-1}{2^\alpha-1},\frac{3^\alpha}{2^{\alpha+2}+\epsilon}\}$&
\cellcolor[gray]{.8} $\rho\geq\frac{11}{7}$&
\multicolumn{2}{c|}{\multirow{2}{*}{not applicable}}\\
[.05in]
\cline{2-3} 
\rule{0pt}{3ex}
&
\begin{tabular}{c}\rule{0pt}{3ex}$\rho=1$ if $\ell(D)\leq \sqrt[\alpha]{b/(2^\alpha-2)}$, else\\ $\rho\leq \max\{2,\left(\frac{3}{2}\right)^{\alpha-1}\}$\\[.05in]\end{tabular}&
\cellcolor[gray]{.8} $\rho\leq \frac{9}{4}$&
&
\\
[.05in]
\hline 
\end{tabular}
\caption{Summary of bounds on the approximation/competitive ratio $\rho$. All lower bounds are existential. The number of PMs in an optimal \CVMA solution is denoted as $m^*$. The number of PMs in an optimal Bin Packing solution is denoted as \OPTBP. The load that minimizes the ratio power consumption against load is denoted as $x^*$. The subset of VMs with load smaller than $x^*$ is denoted as $D_s$. Shaded cells correspond to $\alpha=3$, $b=2$, and $C=2$ on the left and $C=1$ on the right.}

\label{table:results}
\end{table}
\renewcommand{\tabcolsep}{6pt}

In this work, we study offline and online versions of the four versions of the VMA problem.
For the offline problems, the first fact we observe is that there is a hard decision version of \CMVMA: Is there a feasible partition $\pi$ of the set $D$ of VMs? By reduction from the 3-Partition problem, it can be shown that this decision
problem is strongly NP-complete.

We then show that the \VMA, \CVMA, and \MVMA problems are NP-hard in the strong sense, 
even if $\alpha$ is constant. This result implies that these problems do not have FPTAS, even if $\alpha$ is constant.
However, we show that the \VMA and \MVMA problems have PTAS, while the \CVMA problem can not
be approximated beyond a ratio of $\frac{3}{2} \cdot \frac{\alpha-1 + (\frac{2}{3})^{\alpha}}{\alpha}$ (unless $\mathrm{P}=\mathrm{NP}$). 
On the positive side, we show how to use an existing Asymptotic PTAS~\cite{binpackingapprox} to obtain
algorithms that approximate the optimal solution of \CVMA. (See Table~\ref{table:results}.)

Then we move on to online VMA algorithms. We show various upper and lower bounds on the competitive ratio of the four versions of the problem.
(See Table~\ref{table:results}.) Observe that the results are often different depending on whether $x^*$ is smaller than $C$ or not. In fact, 
when $x^* < C$, there is a lower bound of $\frac{(3/2)2^\alpha-1}{2^\alpha-1}$ that applies to all versions of the problem.
The bounds are given as a function of the input parameters of the problem, in order to allow for tighter expressions.
\jordi{To provide intuition on how tight the bounds are, we instantiate them for a realistic
\footnote{The values for $\alpha$ in the servers studied in \cite{eenergy} (denoted as Erdos and Nemesis) are close to $1.5$ and $3$ and $x^*$ values of $0.76C$ and $0.9C$ respectively.}
 value of $\alpha=3$, 
and normalized values of $b=2$ and $C \in \{1,2\}$. The resulting bounds are shown in Table~\ref{table:results}}
in shaded cells. As can be observed, the resulting upper and lower bounds are not very far in general.

\section{Preliminaries}
\label{sec:prelim}

The following claims will be used in the analysis.
%
We call \defn{power rate} the power consumed per unit of load in a PM.
Let $x$ be the load of a PM. Then, its power rate is computed as $f(x)/x$. The load at which the power rate is minimized, denoted $x^*$, is the \defn{optimal load}, and the corresponding rate is the \defn{optimal power rate} $\varphi^*=f(x^*)/x^*$. Using calculus we get the following observation.

\begin{observation}
\label{opt-load}
The  optimal load is $x^* = \left(b/(\alpha-1)\right)^{1/\alpha}.$ Additionaly, for any $x\neq x^*$, $f(x)/x > \varphi^*$.
\end{observation}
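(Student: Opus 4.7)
The plan is to treat $g(x) := f(x)/x = x^{\alpha-1} + b/x$ on the domain $x > 0$ and locate its unique minimizer by elementary calculus. First I would differentiate: $g'(x) = (\alpha-1)x^{\alpha-2} - b/x^2$. Setting $g'(x)=0$ yields $(\alpha-1)x^{\alpha} = b$, and solving gives the claimed value $x^* = (b/(\alpha-1))^{1/\alpha}$. Because $\alpha>1$ and $b>0$, this expression is well-defined, positive, and unique on $(0,\infty)$.

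Next I would verify that $x^*$ is indeed a global minimizer and that the inequality is strict for every other $x>0$. The cleanest route is to observe the boundary behavior: as $x\to 0^+$, the term $b/x$ diverges to $+\infty$, and as $x\to\infty$, the term $x^{\alpha-1}$ diverges to $+\infty$ (using $\alpha>1$). Since $g$ is continuous on $(0,\infty)$ and has a single critical point, that point must be the global minimum; strict positivity of $g'$ for $x>x^*$ and strict negativity for $x<x^*$ (which follows immediately by inspecting the sign of $(\alpha-1)x^\alpha - b$) then yields the strict inequality $g(x) > g(x^*) = \varphi^*$ for all $x \neq x^*$.

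Finally, I would plug $x^*$ back to simplify $\varphi^* = f(x^*)/x^* = (x^*)^{\alpha-1} + b/x^*$; using $(x^*)^\alpha = b/(\alpha-1)$ this gives the compact expression $\varphi^* = \alpha \cdot (x^*)^{\alpha-1}/(\alpha-1)$, though the statement itself does not require this simplification.

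I do not anticipate any real obstacle; this is a one-variable calculus exercise, and the only subtlety is being careful to argue uniqueness of the minimizer (and hence strictness of the inequality) rather than merely exhibiting a critical point. The sign analysis of $g'$ handles this cleanly without invoking the second-derivative test, which would be slightly fussier when $\alpha<2$.
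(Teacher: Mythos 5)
Your proof is correct and takes essentially the same approach as the paper, which simply asserts the observation ``using calculus'': your sign analysis of $g'(x) = x^{-2}\left((\alpha-1)x^{\alpha} - b\right)$ on $(0,\infty)$ is exactly the intended one-variable minimization and correctly delivers both the formula for $x^*$ and the strict inequality for $x \neq x^*$. One tiny slip in your optional closing remark: substituting $(x^*)^{\alpha} = b/(\alpha-1)$ gives $\varphi^* = \alpha (x^*)^{\alpha-1}$, without the extra factor of $1/(\alpha-1)$; this does not affect the statement.
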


%
%
%

The following lemmas will be used in the analysis.
\begin{lemma}
\label{lem:xsmallerC}

Consider two solutions $\P=\{A_1, \ldots, A_m\}$ and $\P'=\{A'_1, \ldots, A'_m\}$ of an instance of the VMA problem, such that for some
$x,y \in [1,m]$ it holds that
\begin{itemize}
\item
$A_x\neq\emptyset$ and  $A_y\neq\emptyset$;
\item
$A'_x = A_x \cup A_y$,  $A'_y=\emptyset$, and $A_i=A'_i$, for all $i \neq x$ and $i \neq y$;
and
\item
$\l(A_x) + \l(A_y) \leq \min\{x^*,C\}$.
\end{itemize}
Then, $P(\P') < P(\P)$.
\end{lemma}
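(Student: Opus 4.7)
The plan is to isolate the contribution of the two affected bins and reduce the lemma to a one-variable inequality about the per-PM cost function $f$. Since $\P$ and $\P'$ agree on every bin other than indices $x$ and $y$, all the other terms of $P(\cdot)$ cancel and I obtain
\begin{equation*}
P(\P) - P(\P') \;=\; f(\l(A_x)) + f(\l(A_y)) - f(\l(A_x) + \l(A_y)).
\end{equation*}
Writing $a = \l(A_x) > 0$ and $c = \l(A_y) > 0$ (both bins are non-empty by hypothesis), every argument of $f$ above lies in its $x^\alpha + b$ branch, so the task reduces to proving $a^\alpha + c^\alpha + b > (a + c)^\alpha$ under the assumption $a + c \leq x^*$. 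The bound by $C$ plays no role in the cost comparison; it is only needed so that $\P'$ remains feasible.

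Next I would fix $s = a + c$ and minimize the left-hand side over partitions $(a,c)$ with $a + c = s$ and $a,c > 0$. The map $t \mapsto t^\alpha + (s-t)^\alpha$ is strictly convex on $(0,s)$ for $\alpha > 1$, so its unique minimum occurs at the balanced split $t = s/2$, giving $2(s/2)^\alpha = 2^{1-\alpha} s^\alpha$. Hence it suffices to establish the single-variable bound $b > (1 - 2^{1-\alpha})\, s^\alpha$, and the bound is strict whenever $s < x^*$ or whenever $a \neq c$.

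Finally I would invoke Observation~\ref{opt-load} and the hypothesis $s \leq x^*$ to get $s^\alpha \leq (x^*)^\alpha = b/(\alpha - 1)$, so the problem reduces to the purely numerical claim $\alpha - 1 > 1 - 2^{1-\alpha}$, equivalently $\alpha + 2^{1-\alpha} > 2$, for all $\alpha > 1$. I expect this elementary transcendental inequality to be the only non-routine piece of the argument; it follows in one line from the fact that $h(\alpha) = \alpha + 2^{1-\alpha} - 2$ satisfies $h(1) = 0$ and $h'(\alpha) = 1 - (\ln 2)\, 2^{1-\alpha} > 0$ on $[1,\infty)$, so $h(\alpha) > 0$ on $(1,\infty)$, which yields the desired strict inequality $P(\P') < P(\P)$.
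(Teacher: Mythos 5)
Your proof is correct, and it in fact takes a sharper route than the paper's. Both arguments start identically: cancel the bins on which $\P$ and $\P'$ agree, note that the capacity constraint is only needed to certify feasibility of the merged solution, and reduce the lemma to the scalar inequality $a^\alpha + c^\alpha + b > (a+c)^\alpha$ with $a + c \leq x^*$ and $b = (x^*)^\alpha(\alpha-1)$. The divergence is in how this inequality is established. The paper discards the term $a^\alpha + c^\alpha$ entirely (bounding it below by $0$) and argues $b \geq (a+c)^\alpha(\alpha-1) > (a+c)^\alpha \geq (a+c)^\alpha - a^\alpha - c^\alpha$; the middle strict inequality requires $\alpha - 1 > 1$, so as written that chain only goes through for $\alpha > 2$, although the lemma is claimed for all $\alpha > 1$. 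You instead retain $a^\alpha + c^\alpha$, lower-bound it by the balanced split $2(s/2)^\alpha = 2^{1-\alpha}s^\alpha$ via convexity, and reduce everything to the numerical fact $\alpha + 2^{1-\alpha} > 2$, which you verify on all of $(1,\infty)$ from $h(1)=0$ and $h'>0$. Your version therefore covers the full parameter range of the lemma and effectively repairs the weak step in the published chain; the price is the extra (but entirely routine) convexity argument. One cosmetic remark: the strictness you need already follows from $h(\alpha) > 0$ for $\alpha > 1$, so the aside that strictness comes from $s < x^*$ or $a \neq c$ is unnecessary and slightly misleading.
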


\begin{proof}
Let $\ell(A_i)=x$ and $\ell(A_j)=y$.
First we notice that $\P'$ is feasible because $x+y\leq C$.
Now, using that $x+y \leq x^*$, we have
\begin{eqnarray*}
b	&=&  (x^*)^{\alpha}(\alpha - 1) \geq  (x+y)^{\alpha}(\alpha - 1) \\
  &>& (x+y)^{\alpha} \geq  (x+y)^{\alpha} -  (x^{\alpha}+y^{\alpha})
\end{eqnarray*}
where the second inequality comes from the fact that $\alpha > 1$. The above inequality is equivalent to
\begin{eqnarray*}
2b + x^\alpha+  y^\alpha > b+ (x+y)^\alpha,
\end{eqnarray*}
which implies the lemma.
\qed
\end{proof}

From this lemma, it follows that the global power consumption can be reduced by having $2$ VMs together in the same PM, when its aggregated load is smaller than $\min\{x^*,C\}$, instead of moving one VM to an unused PM. When we keep VMs together in a given partition we say that we are \emph{using} Lemma~\ref{lem:xsmallerC}.


\begin{lemma}
\label{l-balance}
Consider two solutions $\P=\{A_1, \ldots, A_m\}$ and $\P'=\{A'_1, \ldots, A'_m\}$ of an instance of the VMA problem, such that for some
$x,y \in [1,m]$ it holds that
\begin{itemize}
\item
$A_x \cup A_y = A'_x \cup A'_y$, while $A_i=A'_i$, for all $x \neq i \neq y$;
\item
none of $A_x$,  $A_y$, $A'_x$, and $A'_y$ is empty;
and
\item
$|\l(A_x) - \l(A_y)| < |\l(A'_x) - \l(A'_y)|$.
\end{itemize}
Then, $P(\P) < P(\P')$.
\end{lemma}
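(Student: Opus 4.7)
My plan is to reduce the claim to a one–variable convexity statement. Since the two partitions $\P$ and $\P'$ agree outside indices $x$ and $y$, the difference in the objective reduces to
$$P(\P)-P(\P') \;=\; f(\l(A_x))+f(\l(A_y))-f(\l(A'_x))-f(\l(A'_y)),$$
and the hypothesis $A_x\cup A_y=A'_x\cup A'_y$ gives a common total $s:=\l(A_x)+\l(A_y)=\l(A'_x)+\l(A'_y)$. So I only need to compare $f(u)+f(s-u)$ across two choices of $u$.

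Concretely, I would write $\l(A_x)=s/2-d$, $\l(A_y)=s/2+d$ with $d\ge 0$ (swapping the roles of $x,y$ if needed), and similarly $\l(A'_x)=s/2-d'$, $\l(A'_y)=s/2+d'$ with $d'\ge 0$. The third hypothesis then reads $2d<2d'$, i.e.\ $d<d'$. Nonemptiness of all four sets, together with loads being positive, implies $0\le d<d'<s/2$, so every argument fed to $f$ in the four terms lies strictly in $(0,s)$ and we may ignore the $x=0$ branch of $f$.

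Next I would introduce the auxiliary function
$$g(t)\;=\;f(s/2-t)+f(s/2+t)\;=\;(s/2-t)^\alpha+(s/2+t)^\alpha+2b,\qquad t\in[0,s/2),$$
and compute $g'(t)=\alpha\bigl[(s/2+t)^{\alpha-1}-(s/2-t)^{\alpha-1}\bigr]$. Since $\alpha>1$, the map $x\mapsto x^{\alpha-1}$ is strictly increasing on $(0,\infty)$, so $g'(t)>0$ for every $t\in(0,s/2)$. Hence $g$ is strictly increasing on $[0,s/2)$, and from $d<d'$ we conclude $g(d)<g(d')$. Rewriting this as $f(\l(A_x))+f(\l(A_y))<f(\l(A'_x))+f(\l(A'_y))$ and plugging back into the displayed difference yields $P(\P)<P(\P')$, as desired.

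The argument is essentially strict convexity of $x^\alpha$ (equivalently, Schur convexity of the sum), and there is no real obstacle once one rules out the discontinuity of $f$ at $0$. The mild subtlety worth flagging is precisely that point: both the hypothesis that none of $A_x,A_y,A'_x,A'_y$ is empty and the assumption of positive loads are needed so that every evaluation $f(\cdot)$ above falls on the smooth, strictly convex branch $x^\alpha+b$, which is what lets the derivative computation go through cleanly.
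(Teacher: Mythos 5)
Your proof is correct and follows essentially the same route as the paper's: both reduce the claim to comparing $u^\alpha+(s-u)^\alpha$ at two splits of the common total and conclude by monotonicity of a one-variable function (the paper shows $\delta^\alpha+(1-\delta)^\alpha$ is decreasing on $(0,1/2)$, which is your $g$ under the substitution $\delta = 1/2 - t/s$). Your explicit derivative computation and the remark about avoiding the $f(0)=0$ branch just fill in details the paper leaves implicit.
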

\begin{proof}
From the definition of $P(\cdot)$, to prove the claim is it enough to prove that $\l(A_x)^\alpha + \l(A_y)^\alpha < \l(A'_x)^\alpha + \l(A'_y)^\alpha$. Let us
assume wlog that $\l(A_x) \leq \l(A_y)$ and $\l(A'_x) \leq \l(A'_y)$. Let us denote $L=\l(A_x) + \l(A_y) = \l(A'_x) + \l(A'_y)$, and assume that
$\l(A_x) = \delta_1 L$ and $\l(A'_x) = \delta_2 L$. Note that $\delta_2 < \delta_1 \leq 1/2$. Then, the claim to be proven becomes
\begin{eqnarray*}
&(\delta_1 L)^\alpha + ((1-\delta_1)L)^\alpha < (\delta_2 L)^\alpha + ((1-\delta_2) L)^\alpha\\
&\delta_1^\alpha + (1-\delta_1)^\alpha < \delta_2^\alpha + (1-\delta_2)^\alpha
\end{eqnarray*}
Which holds because the function $f(x)=x^\alpha + (1-x)^\alpha$ is decreasing in the interval $(0,1/2)$.
\qed
\end{proof}

This lemma carries the intuition that balancing the load among the used PMs as much as possible reduces the power consumption.

\begin{corollary}
\label{cor:balance}
Consider a solution $\P=\{A_1, \ldots, A_m\}$ of an instance of the VMA problem with total load $\L$, such that exactly $k$ of the $A_x$ sets, $x \in [1,m]$, are non-empty (hence it uses $k$ PMs). Then, the power consumption is lower bounded by the power of the (maybe unfeasible) solution that balances the load evenly, i.e.,
$$
P(\P) \geq kb + k (\L/k)^\alpha.
$$
\end{corollary}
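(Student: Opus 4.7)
The plan is to invoke the convexity of $x^\alpha$ for $\alpha>1$ (which is exactly the principle that powers Lemma~\ref{l-balance}) and apply it via Jensen's inequality to the $k$ non-empty parts of $\P$.

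First I would write out the power of $\P$ using the fact that exactly $k$ of the sets $A_x$ are non-empty, so
\begin{equation*}
P(\P) \;=\; \sum_{x \in [1,m] : A_x \neq \emptyset} \bigl(b + \ell(A_x)^\alpha\bigr) \;=\; kb + \sum_{x : A_x \neq \emptyset} \ell(A_x)^\alpha,
\end{equation*}
using that $\mu=1$ and $f(0)=0$. The remaining task is to lower bound $\sum_{x : A_x \neq \emptyset} \ell(A_x)^\alpha$ by $k(\L/k)^\alpha$.

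For this I would apply Jensen's inequality to the convex function $g(x) = x^\alpha$ (convex since $\alpha > 1$) over the $k$ non-empty loads $\ell(A_{x_1}), \dots, \ell(A_{x_k})$, whose sum equals $\L$ by definition. Jensen gives
\begin{equation*}
\frac{1}{k}\sum_{j=1}^{k} \ell(A_{x_j})^\alpha \;\geq\; \left( \frac{1}{k}\sum_{j=1}^{k} \ell(A_{x_j}) \right)^{\!\!\alpha} \;=\; \bigl(\L/k\bigr)^\alpha,
\end{equation*}
and multiplying by $k$ and adding $kb$ yields the claimed bound. Note that this balanced ``solution'' is precisely the one whose existence is irrelevant—it may violate capacity constraints or fractionally split VMs—which is why the corollary is only a lower bound.

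There is no real obstacle here; the only subtlety is making explicit that the bound holds regardless of feasibility, and that it is consistent with (indeed, a direct consequence of) Lemma~\ref{l-balance}: repeatedly rebalancing pairs of non-empty PMs can only decrease the power, and the infimum of this process over all rebalancings with a fixed support of size $k$ is attained when all $k$ loads are equal to $\L/k$. One could alternatively phrase the argument as an iterative application of Lemma~\ref{l-balance}, but the Jensen route is the cleanest and most self-contained.
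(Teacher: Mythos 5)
Your proof is correct. The paper itself gives no explicit argument for the corollary: it is presented as an immediate consequence of Lemma~\ref{l-balance}, whose proof rests on the fact that $x^\alpha + (1-x)^\alpha$ is decreasing on $(0,1/2)$, i.e.\ on the convexity of $x^\alpha$. Your Jensen argument uses exactly that convexity, so the two routes are morally the same; but your version is the more rigorous one. Iterating Lemma~\ref{l-balance} literally, as the paper implicitly suggests, only compares genuine partitions of the discrete VM set, and no finite sequence of such rebalancings need reach the configuration in which every non-empty PM carries exactly $\ell(D)/k$ (the VMs are atomic, so perfect balance is generally unattainable); one would have to pass to a limiting or continuous version of the rebalancing step to reach the stated bound. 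Applying Jensen's inequality to the real-valued loads $\ell(A_{x_1}),\dots,\ell(A_{x_k})$ delivers the inequality $\sum_j \ell(A_{x_j})^\alpha \geq k(\ell(D)/k)^\alpha$ in one step, with no reference to feasibility, which is precisely what the ``maybe unfeasible'' phrasing of the corollary requires. The only point worth making explicit in your write-up is that the empty sets contribute nothing to either the power or the total load, so the sum of the $k$ non-empty loads is indeed $\ell(D)$; you use this implicitly and it is immediate from the definitions.
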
 

\section{Offline Analysis}
\label{sec:alloffline}

\subsection{NP-hardness}

As was mentioned, it can be shown that deciding whether there is a feasible solution for an instance of the \CMVMA problem is NP-complete or not, by a direct
reduction from the 3-Partition problem. However, this result does not apply directly to the \CVMA, \MVMA, and \VMA problems. We show now that these
problems are NP-hard. We first prove the following lemma.

\begin{lemma}
\label{l-all-xstar}
Given an instance of the VMA problem, any solution $\P=\{A_1, \ldots, A_m\}$ where $\l(A_i) \neq x^*$ for some
$i\in[1,m] : A_i\neq\emptyset$, has power consumption
$
P(\P) > \rho^* \l(D) = \rho^* \sum_{d \in D} \l(d).
$
\end{lemma}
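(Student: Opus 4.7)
The plan is to express the total power of $\pi$ as a weighted sum of the per-PM power rates $f(\ell(A_i))/\ell(A_i)$, and then invoke Observation~\ref{opt-load} term-by-term. Concretely, since every non-empty bin has strictly positive load, I would rewrite
\[
P(\pi) \;=\; \sum_{i\,:\,A_i\neq\emptyset} f(\ell(A_i)) \;=\; \sum_{i\,:\,A_i\neq\emptyset} \frac{f(\ell(A_i))}{\ell(A_i)}\cdot \ell(A_i),
\]
where the division is well-defined precisely because the sum ranges only over non-empty $A_i$.

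Next, I would apply Observation~\ref{opt-load}, which says that $\varphi^*$ is the minimum of the function $x\mapsto f(x)/x$ and that the inequality $f(x)/x > \varphi^*$ is strict whenever $x\neq x^*$. Hence for every non-empty $A_i$ we have $f(\ell(A_i))/\ell(A_i) \geq \varphi^*$, with strict inequality in any bin whose load differs from $x^*$. Summing and using that $\pi$ partitions $D$, so that $\sum_{i\,:\,A_i\neq\emptyset}\ell(A_i)=\ell(D)$, gives $P(\pi)\geq \varphi^*\ell(D)$.

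To upgrade the bound to the strict inequality claimed by the lemma, I would isolate the distinguished index $i_0$ given in the hypothesis, for which $A_{i_0}\neq\emptyset$ and $\ell(A_{i_0})\neq x^*$. That term contributes $f(\ell(A_{i_0})) > \varphi^*\ell(A_{i_0})$ by Observation~\ref{opt-load}, while each remaining non-empty bin contributes at least $\varphi^*\ell(A_i)$; adding these inequalities preserves strictness because $\ell(A_{i_0})>0$, yielding $P(\pi) > \varphi^*\ell(D)$.

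I do not anticipate a real obstacle: the lemma is essentially a term-by-term restatement of the fact that $\varphi^*$ is the minimum power-per-unit-load rate. The only points requiring care are (i) restricting the sum to non-empty bins before dividing by $\ell(A_i)$, and (ii) ensuring that the single strict inequality from the bin at index $i_0$ is not washed out when combined with the non-strict inequalities from the other bins, which follows because $\ell(A_{i_0})>0$.
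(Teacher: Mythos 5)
Your proposal is correct and follows essentially the same route as the paper: write $P(\P)$ as $\sum_i (f(\ell(A_i))/\ell(A_i))\,\ell(A_i)$ over non-empty bins, apply Observation~\ref{opt-load} term by term, and sum to get $\varphi^*\ell(D)$. You are in fact slightly more careful than the paper's one-line derivation in tracking that the strict inequality comes only from the distinguished bin with load $\neq x^*$ while the remaining bins contribute non-strict bounds (and note that the lemma's $\rho^*$ is the quantity Observation~\ref{opt-load} calls $\varphi^*$).
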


\begin{proof}
The total cost of $\P$ is $P(\P) = \sum_{i\in[1,m]}f(\l(A_i))$ which, from Observation \ref{opt-load}, satisfies
\begin{eqnarray*}
P(\P) &>& \sum_{i\in[1,m]:A_i\neq\emptyset} \l(A_i) \rho^*\\
		&=& \rho^* \sum_{i\in[1,m]:A_i\neq\emptyset} \sum_{d\in A_i} \l(d)= \rho^* \sum_{d\in D} \l(d).
\end{eqnarray*}
\qed
\end{proof}


We show now in the following theorem that the different versions of the \CMVMA problem with unbounded $C$ or $m$ are NP-hard.

\begin{theorem}
The \CVMA, \MVMA and \VMA problems are strongly NP-hard, even if $\alpha$ is constant.
\end{theorem}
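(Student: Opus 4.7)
The plan is a reduction from the strongly NP-complete 3-Partition problem. Given $3n$ positive integers $a_1,\ldots,a_{3n}$ with $a_i\in(B/4,B/2)$ and $\sum_i a_i=nB$, I construct a VMA instance with $3n$ VMs of loads $\ell(d_i)=a_i$; I fix the constant $\alpha=2$ and set $b=B^2$, so that by Observation~\ref{opt-load} the optimal per-PM load is $x^*=B$. For \CVMA I additionally set $C=B$; for \MVMA I set $m=n$; for \VMA both parameters are unbounded. All numerical values in the constructed instance are polynomially bounded in the unary size of the 3-Partition input, so this is a pseudopolynomial reduction, which is what strong NP-hardness requires.

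I claim that the 3-Partition instance is a YES-instance if and only if the optimal VMA cost equals $nf(x^*)$. For the forward direction, a 3-Partition solution produces $n$ triples summing to $B=x^*$, which in every variant is a feasible assignment onto $n$ PMs each loaded exactly to $x^*$, with total cost $nf(x^*)=(f(x^*)/x^*)\,\ell(D)$. For the backward direction, I argue the contrapositive: if the 3-Partition instance is a NO-instance, then no feasible VMA solution can have every nonempty $A_i$ satisfying $\ell(A_i)=x^*$, because $\ell(D)=nB=n\,x^*$ would force such a solution to use exactly $n$ PMs, and since each $a_i\in(B/4,B/2)$ every PM would then hold exactly three items summing to $B$, yielding a valid 3-Partition. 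By Lemma~\ref{l-all-xstar}, any feasible solution in which some nonempty PM has load different from $x^*$ has cost strictly greater than $(f(x^*)/x^*)\,\ell(D)=nf(x^*)$; hence in the NO case the optimum strictly exceeds $nf(x^*)$. The capacity $C=B=x^*$ imposed in the \CVMA variant is harmless because the target assignment already respects it, and any feasible solution respects it a fortiori.

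Combining the two directions, solving the optimization (or decision) version of \CVMA, \MVMA, or \VMA solves 3-Partition, so all three problems are strongly NP-hard, even for the constant $\alpha=2$. The only subtle point is ensuring that the parameter $b=(\alpha-1)(x^*)^\alpha$ is rational; this is free for $\alpha=2$, and for a general rational $\alpha=p/q$ it is handled by first multiplying every $a_i$ by a polynomial-size factor so that the targeted $x^*$ becomes a perfect $q$-th power, a rescaling that preserves the whole gap argument because it scales every load uniformly and leaves the partition structure unchanged.
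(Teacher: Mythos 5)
Your proposal is correct and follows essentially the same route as the paper: a reduction from 3-Partition with loads equal to the element sizes and $b$ chosen so that $x^*=B$, combined with Lemma~\ref{l-all-xstar} to show that the optimum equals $kf(x^*)$ exactly when a perfect 3-partition exists. Your explicit remarks on the pseudopolynomial bound on $b$, the forced three-items-per-PM structure, and the rationality of $b$ for general $\alpha$ only make explicit what the paper leaves implicit.
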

\begin{proof}
We show a reduction from  3-Partition defined as follows~\cite{gareyjohnson}, which is strongly NP-complete.

INSTANCE:  Set $A$ of $3k$ elements, a bound $B\in\mathbb{Z}^+$ and, for each $a\in A$, a size $s(a)\in\mathbb{Z}^+$ such that $B/4<s(a)<B/2$ and $\sum_{a\in A} s(a) = kB$.

QUESTION: can $A$ be partitioned into $k$ disjoint sets $\{A_1,A_2,\dots,A_k\}$ such that $\sum_{a\in A_i} s(a) = B$  for each $1\leq i\leq k$?

The reduction is as follows. Given an instance of 3-Partition on a set $A=\{a_1,\dots,a_{3k}\}$ with bound $B$, and given a fixed value $\alpha>1$, we define an instance $\mathcal{I}$ of \VMA as follows:
$D = \{a_1,\dots,a_{3k}\}$, $\l(\cdot) = s(\cdot)$, and
$b = B^\alpha(\alpha-1)$ (i.e., $x^*=B$).
(For the proof of the \CVMA and \MVMA problems it is enough to set $C = B$ and $m = k$ when required.)
We show now that the answer to the 3-Partition problem is YES if and only if
the output $\P=\{A_1,A_2,\dots,A_m\}$ of the \VMA problem on input $\mathcal{I}$ is such that $\sum_{i=1}^m f(\l(A_i))=kf(B)$.

For the direct implication, assume that there exists a partition $\{A_1,A_2,\dots,A_k\}$ of $A$ such that
for each $i \in [1,k]$, $\sum_{a\in A_i} s(a) = B$. Then, in the context of the \VMA problem, such partition has cost $\sum_{i=1}^m f(\l(A_i))=kf(B)$. We claim that any partition has at least cost $kf(B)$.
In order to prove it, assume for the sake of contradiction that there is a partition $\P'=\{A_1',A_2',\dots,A_m'\}$ of \VMA on input $\mathcal{I}$ with cost less than $kf(B)$. Then, there is some $i\in[1,m]$ such that $A'_i \neq \emptyset$ and $\l(A'_i)\neq B$. From Lemma~\ref{l-all-xstar}, $P(\P')> \rho^* \l(D) = (f(x^*)/x^*)kB$. Since $B=x^*$, we have that $P(\P') > k f(B)$, which is a contradiction.

To prove the reverse implication, assume an output $\P=\{A_1,A_2,\dots,A_m\}$ of the \VMA problem on input $\mathcal{I}$ such that $P(\P) =\sum_{i=1}^m f(\l(A_i))=kf(B)$. Then, it must be $\forall i\in[1,m]:A_i\neq\emptyset, \l(A_i)=B$. Otherwise, from Lemma~\ref{l-all-xstar}, $P(\P) > k f(B)$, a contradiction.
\qed
\end{proof}

It is known that strongly NP-hard problems cannot have a fully polynomial-time approximation scheme (FPTAS)~\cite{citeulike:556492}. Hence, the following corollary.

\begin{corollary}
The \CVMA, \MVMA and \VMA problems do not have fully polynomial-time approximation schemes (FPTAS), even if $\alpha$ is constant.
\end{corollary}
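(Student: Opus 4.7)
The plan is to derive the corollary as an immediate consequence of the strong NP-hardness theorem just established, combined with the classical meta-theorem of Garey and Johnson (see~\cite{citeulike:556492}): any strongly NP-hard optimization problem whose objective value on every feasible solution is bounded above by a polynomial in the instance size and the magnitude of the largest number in the input does not admit an FPTAS unless $\mathrm{P}=\mathrm{NP}$.

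First I would invoke the preceding theorem, which states that \CVMA, \MVMA and \VMA are strongly NP-hard even when $\alpha$ is fixed. Next I would verify that the hypothesis on the objective function is satisfied. For any feasible partition $\pi=\{A_1,\dots,A_m\}$, each PM contributes $f(\ell(A_i))=\ell(A_i)^\alpha+b$ to the total cost. Writing $N$ for the encoding length of the input and $M$ for the magnitude of the largest number appearing in it (loads, $b$, $C$, $m$), we have $\ell(A_i)\leq nM$, so that
\begin{equation*}
P(\pi)\;\leq\;m\bigl((nM)^\alpha+b\bigr)\;\leq\;nM\bigl((nM)^\alpha+M\bigr),
\end{equation*}
which, for constant $\alpha$, is polynomial in $N$ and $M$. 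Thus the Garey--Johnson hypothesis applies.

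Concluding, if any of \CVMA, \MVMA, \VMA admitted an FPTAS, then by a standard rounding/scaling argument one could decide its (strongly NP-hard) exact version in pseudopolynomial time, contradicting $\mathrm{P}\neq\mathrm{NP}$.

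The only mild subtlety, which I would flag briefly rather than belabor, is that the meta-theorem requires the polynomial bound on $P(\pi)$, and that this bound depends on $\alpha$ being a constant (otherwise $(nM)^\alpha$ is no longer polynomial in $N$ and $M$); this is exactly why the statement of the corollary, like that of the underlying theorem, carries the ``$\alpha$ constant'' qualifier. Beyond this observation the proof is essentially a one-line invocation of known results, with no real technical obstacle.
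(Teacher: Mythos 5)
Your proposal is correct and follows essentially the same route as the paper, which derives the corollary in one line from the strong NP-hardness theorem together with the Garey--Johnson result that strongly NP-hard problems admit no FPTAS unless $\mathrm{P}=\mathrm{NP}$. Your explicit verification that the objective value is polynomially bounded in the instance size and the largest input magnitude (for constant $\alpha$) is a hypothesis the paper leaves implicit, so your write-up is, if anything, slightly more careful than the original.
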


In the following sections we show that, while the \MVMA and \VMA problems have polynomial-time approximation schemes (PTAS), the \CVMA problem
cannot be approximated below $\frac{3}{2} \cdot \frac{\alpha-1 + (2/3)^{\alpha}}{\alpha}$.


\subsection{The \MVMA and \VMA Problems Have PTAS}
\label{sec:offmvma}

We have proved that the \MVMA and \VMA problems are NP-hard in the strong sense and that, hence, there exists no FPTAS for them.
However, Alon et al.~\cite{alon1998approximation}, proved that if a function $f(\cdot)$ satisfies a condition denoted $F*$, then
the problem of scheduling jobs in $m$ identical machines so that $\sum_i{f(M_i)}$ is minimized has a PTAS, where $M_i$ is the
load of the jobs allocated to machine $i$. This result implies that if our function $f(\cdot)$ satisfies condition $F*$, the same PTAS
can be used for the \MVMA and \VMA problems.
From Observation 6.1 in \cite{epstein2004}, it can be derived that, in fact, our power consumption function $f(\cdot)$ satisfies condition $F*$.
Hence, the following theorem.

\begin{theorem}
There are polynomial-time approximation schemes (PTAS) for the \MVMA and \VMA problems.
\end{theorem}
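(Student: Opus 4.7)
The plan is to reduce both problems to the framework of Alon, Azar, Woeginger and Yadid~\cite{alon1998approximation}, which gives a PTAS for minimizing $\sum_i f(M_i)$ over schedules on $m$ identical machines whenever the per-machine cost function $f$ satisfies a structural condition denoted $F^*$. The $(\cdot,m)$-VMA problem is literally this scheduling problem with $f(x)=x^\alpha + b$ for $x>0$ and $f(0)=0$, so the entire task reduces to (i) verifying that our $f$ satisfies $F^*$, and (ii) handling the unbounded-$m$ case.

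First I would recall $F^*$: roughly, $f$ must be nondecreasing on its support, must grow at least linearly but at most polynomially, and must satisfy a local smoothness condition of the form ``small relative changes in the load produce small relative changes in $f$'', so that the standard rounding-based PTAS for scheduling can be lifted from makespan to general costs. For $x>0$, $f(x)=x^\alpha + b$ is continuous, strictly increasing, convex, and satisfies $f(x(1+\varepsilon)) \leq (1+\varepsilon)^\alpha f(x)$, which is $(1+O(\varepsilon))\,f(x)$ for bounded $\alpha$ and small $\varepsilon$. The only delicate point is the jump at $0$, but because every nonempty machine pays at least the additive constant $b$, the ``large'' jobs in the Alon et al.\ framework absorb this term without breaking the condition. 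Rather than redo this verification from scratch, I would invoke Observation~6.1 of Epstein et al.~\cite{epstein2004}, which already establishes exactly this fact for functions of our form. Plugging $f$ into the Alon et al.\ PTAS yields, for every $\varepsilon>0$, a $(1+\varepsilon)$-approximation in time polynomial in $n$ (with the dependence on $1/\varepsilon$ hidden in the constant), which handles \MVMA.

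For \VMA, where both $C$ and $m$ are unbounded, I would first argue that we can reduce to a bounded-$m$ instance. Since $b>0$, any solution using more than $n$ PMs is wasteful: with $n$ VMs there exist at most $n$ nonempty bins in any partition, so the optimum opens at most $n$ PMs. Hence running the \MVMA PTAS on the same instance with the parameter $m$ set to $n$ produces a solution whose cost is at most $(1+\varepsilon)\opt$, and whose running time remains polynomial in $n$. This immediately gives a PTAS for \VMA.

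The main obstacle is really just a bookkeeping one: checking that the particular $f(x)=x^\alpha+b$ (with its discontinuity at $0$) falls within the scope of condition $F^*$ rather than merely within the scope of polynomial growth. Once Epstein et al.'s observation is cited, everything else is either a direct invocation of an existing PTAS or the simple reduction $m \leftarrow n$ above; no new algorithmic idea is needed beyond what is already in~\cite{alon1998approximation,epstein2004}.
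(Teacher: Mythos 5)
Your proposal is correct and follows essentially the same route as the paper: invoke the Alon et al.\ PTAS for functions satisfying condition $F^*$ and cite Observation~6.1 of Epstein et al.\ to verify that $f$ qualifies. Your explicit reduction of the unbounded-$m$ case to \MVMA by setting $m=n$ is a detail the paper leaves implicit, but it is the natural justification and does not constitute a different approach.
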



\subsection{Bounds on the Approximability of the \CVMA Problem}
\label{sec:offcvma}

We study now the \CVMA problem, where we consider an unbounded number of machines with bounded capacity $C$. We will provide a lower bound on its approximation ratio, independently on the relation between $x^*$ and $C$; and upper bounds for the cases when $x^*\geq C$ and $x^*<C$.

\subsubsection{Lower bound on the approximation ratio}

The following theorem shows a lower bound on the approximation ratio of any offline algorithm for \CVMA.

\begin{theorem}
\label{th:offlineLB}
No algorithm achieves an approximation ratio smaller than
$\frac{3}{2} \cdot \frac{\alpha-1 + (\frac{2}{3})^{\alpha}}{\alpha}$ for the \CVMA problem unless $\mathrm{P}=\mathrm{NP}$.
\end{theorem}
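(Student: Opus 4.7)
The plan is to prove Theorem~\ref{th:offlineLB} via a gap-preserving reduction from the classical Partition problem. Given a Partition instance consisting of positive integers $a_1,\ldots,a_n$ with $\sum_i a_i = 2B$ (and WLOG $a_i \le B$), I build an instance of \CVMA with $D=\{d_1,\ldots,d_n\}$, $\ell(d_i)=a_i$, capacity $C=B$, and cost constant $b=(\alpha-1)B^\alpha$. Observation~\ref{opt-load} then yields $x^*=(b/(\alpha-1))^{1/\alpha}=B=C$, so the per-bin power is minimized exactly at a fully loaded machine.

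In the YES case, a balanced 2-partition of $D$ gives a feasible solution with two PMs each of load $B$, achieving cost $2(B^\alpha+b)=2\alpha B^\alpha$; hence $\opt \le 2\alpha B^\alpha$. In the NO case, any feasible assignment must use $k\ge 3$ PMs, because any 2-bin split that respects capacity $C=B$ would witness a valid Partition. For every such $k$, Corollary~\ref{cor:balance} gives the lower bound
\begin{equation*}
P(\pi)\ \ge\ g(k)\ :=\ kb+k\bigl(2B/k\bigr)^\alpha\ =\ k(\alpha-1)B^\alpha+(2B)^\alpha k^{1-\alpha}.
\end{equation*}
Differentiating, $g'(k)=b-(\alpha-1)(2B)^\alpha k^{-\alpha}$ vanishes precisely at $k=2$ (thanks to our choice of $b$), and is strictly positive for $k>2$, so $g$ is increasing on $[2,\infty)$. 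Therefore the minimum of $g$ over integers $k\ge 3$ is attained at $k=3$, yielding $\opt \ge g(3)=3(\alpha-1)B^\alpha+3(2/3)^\alpha B^\alpha$.

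Taking the ratio of the NO-case lower bound to the YES-case upper bound gives
\begin{equation*}
\frac{g(3)}{2\alpha B^\alpha}\ =\ \frac{3}{2}\cdot\frac{\alpha-1+(2/3)^\alpha}{\alpha},
\end{equation*}
so any polynomial-time algorithm with strictly better approximation ratio could separate the two cases and thus solve Partition in polynomial time, contradicting $\mathrm{P}\ne\mathrm{NP}$.

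The main technical obstacle I anticipate is verifying cleanly that the continuous minimization of $g(k)$ actually certifies the discrete minimum at $k=3$, and ruling out the possibility that an unbalanced solution on $k\ge 3$ PMs could dip below $g(3)$; the latter is exactly what Corollary~\ref{cor:balance} rules out, which is why choosing $b$ so that $x^*=C$ is the key trick that aligns the balanced-load lower bound with the optimum achieved in the YES case. A minor technicality is that $b=(\alpha-1)B^\alpha$ may be irrational for general rational $\alpha$; this is handled either by scaling the instance to integer data when $\alpha$ is an integer, or by observing that the reduction tolerates arbitrarily small rational perturbations of $b$ without changing the gap.
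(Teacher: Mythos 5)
Your proposal is correct and follows essentially the same route as the paper's proof: the same reduction from Partition with $C=M/2$ and $b=C^\alpha(\alpha-1)$ so that $x^*=C$, the same use of Corollary~\ref{cor:balance} to lower-bound the NO-case cost by the balanced $3$-PM solution, and the same ratio computation. Your explicit monotonicity check of $g(k)=kb+k(2B/k)^\alpha$ for $k\ge 2$ is a welcome addition the paper leaves implicit when it asserts the $3$-PM balanced bound dominates all solutions with $k\ge 3$ machines.
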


\begin{proof}
The claim is proved showing a reduction from the partition problem~\cite{gareyjohnson}. In the partition problem there is
a set $A=\{a_1, a_2, \ldots, a_n\}$ of $n$ elements, there is a size $s(a)$ for each element $a\in A$, and the sum $M = \sum_{a \in A} s(a)$ of the
sizes of the elements in $A$. The problem decides whether there is a subset $A' \subset A$ such that $\sum_{a \in A'} s(a) = M/2$.

From an instance of the partition problem, we construct an instance of the \CVMA problem as follows.  The set of VMs in the system is
$D=\{a_1,a_2,\ldots,a_n\}$, the load function is $\ell(\cdot)=s(\cdot)$, the capacity of each PM is set to $C=M/2$, and
$b$ is set to $b=C^\alpha (\alpha-1)$ (i.e., $x^*=C$). Let us study the optimal partition $\pi^*$
such that the total power consumption $P(\pi^*)$ is minimized.
If there is a partition of $D$ such that each subset in this partition has load $M/2$ then, from Observation~\ref{opt-load},
$\pi^*$ has all the VMs assigned to two PMs.
Otherwise, $\pi^*$ needs at least $3$ PMs to allocate all the VMs. From Corollary~\ref{cor:balance}, the power consumption of this solution is lower bounded by the power of a (maybe unfeasible) partition that balances the load among the 3 PMs as evenly as possible. Formally,
\begin{eqnarray*}
	&\exists A': \sum_{a \in A'} s(a) = M/2 \\
	\Rightarrow  &P(\pi^*) = 2b+2\left(\frac{M}{2}\right)^\alpha = 2b+2 C^\alpha\\
	&\nexists A': \sum_{a \in A'} s(a) = M/2 \\
	\Rightarrow  &P(\pi^*) \geq 3b+3\left(\frac{M}{3}\right)^\alpha = 3b+3\left(\frac{2C}{3}\right)^\alpha.
\end{eqnarray*}
Comparing both values we obtain the following ratio.
\begin{eqnarray*}
\rho&=&\frac{3b+3\left(\frac{2C}{3}\right)^\alpha}{2b+2C^\alpha} =
     \frac{3C^\alpha(\alpha-1)+3\left(\frac{2C}{3}\right)^\alpha}{2C^\alpha(\alpha-1)+2C^\alpha} \\
&=& \frac{3}{2} \cdot \frac{\alpha-1 + (\frac{2}{3})^{\alpha}}{\alpha}.
\end{eqnarray*}
Therefore, given any $\epsilon > 0$, having a polynomial-time algorithm ${\cal A}$ with approximation ratio $\rho- \epsilon$
would imply that this algorithm could be used to decide if there is a subset $A' \subset A$ such that $\sum_{a \in A'} s(a) = M/2$.
In other words, this algorithm would be able to solve the partition problem.
This contradicts the fact that the partition problem is NP-hard and no polynomial time algorithm solves it unless $\mathrm{P}=\mathrm{NP}$.
Therefore, there is no algorithm that achieves a $\rho-\epsilon=\frac{3}{2} \cdot \frac{\alpha-1 + (\frac{2}{3})^{\alpha}}{\alpha} - \epsilon$ approximation ratio for the \CVMA problem unless $\mathrm{P}=\mathrm{NP}$.
\qed
\end{proof}

\subsubsection{Upper bound on the approximation ratio for $x^*\geq C$}
\label{sec:UpboundXlargC}

We study now an upper bound on the competitive ratio of the \CVMA problem for the case when $x^*\geq C$. Under this condition, the best is to load
each PM to its full capacity. Intuitively, an optimal solution should load every machine up to its maximum capacity or, if not possible, should balance the load among PMs to maximize the average load.  The following lemma formalizes this observation.

\begin{lemma}
\label{lem:lbXlargerC}
For any system with unbounded number of PMs where $x^*\geq C$
the power consumption of the optimal assignment $\pi^*$ is lower bounded by
the power consumption of a (possibly not feasible) solution where $\l(D)$ is evenly distributed among \OPTBP PMs,
where \OPTBP is the minimum number of PMs required to allocate all VMs (i.e., the optimal solution of the packing problem).
That is, $P(\pi^*) \geq \OPTBP \cdot b+ \OPTBP (\l(D)/\OPTBP)^\alpha$.
\end{lemma}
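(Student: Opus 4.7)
The plan is to combine the feasibility lower bound on the number of PMs used by any assignment with the balanced-load lower bound from Corollary~\ref{cor:balance}, and then analyze the resulting one-variable function in $k$, the number of used PMs.

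First I would observe that any feasible assignment uses at least $\OPTBP$ PMs: by definition $\OPTBP$ is the minimum number of PMs of capacity $C$ needed to hold the total load $\l(D)$, so an assignment using fewer PMs would violate the capacity constraint. In particular, the optimal VMA solution $\pi^*$ uses some $k\geq \OPTBP$ PMs. Then, by Corollary~\ref{cor:balance}, its cost satisfies
\begin{equation*}
P(\pi^*)\ \geq\ k\, b + k\left(\frac{\l(D)}{k}\right)^{\!\alpha}\ =:\ g(k).
\end{equation*}

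Next I would show that $g$ is non-decreasing on $[\OPTBP,\infty)$, so that $g(k)\geq g(\OPTBP)$ and the desired bound follows. Treating $k$ as a real variable, rewrite $g(k)=kb+\l(D)^{\alpha}/k^{\alpha-1}$ and differentiate:
\begin{equation*}
g'(k)\ =\ b - (\alpha-1)\,\frac{\l(D)^{\alpha}}{k^{\alpha}}.
\end{equation*}
The unique critical point is at $k_0=\l(D)\bigl((\alpha-1)/b\bigr)^{1/\alpha}=\l(D)/x^*$ by Observation~\ref{opt-load}, and $g$ is non-decreasing for $k\geq k_0$. The main (and only) hinge of the argument is then the chain of inequalities
\begin{equation*}
k_0\ =\ \frac{\l(D)}{x^*}\ \leq\ \frac{\l(D)}{C}\ \leq\ \OPTBP,
\end{equation*}
which uses the hypothesis $x^*\geq C$ together with the trivial packing lower bound $\OPTBP\geq \l(D)/C$ (total load cannot exceed total capacity). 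Hence $\OPTBP\geq k_0$, so $g$ is non-decreasing on $[\OPTBP,\infty)$, and any $k\geq \OPTBP$ yields $g(k)\geq g(\OPTBP)$.

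Combining, $P(\pi^*)\geq g(k)\geq g(\OPTBP)=\OPTBP\cdot b+\OPTBP(\l(D)/\OPTBP)^{\alpha}$, as claimed. The potential subtlety I would take care to state explicitly is that we are lower-bounding $P(\pi^*)$ by the value of a possibly infeasible ``evenly distribute over $\OPTBP$ PMs'' configuration (its slots may exceed $C$), which is precisely why the bound is expressed as a numerical inequality rather than as the cost of a constructed assignment; Corollary~\ref{cor:balance} already licenses this because it bounds $P$ by the cost of the perfectly balanced distribution regardless of feasibility.
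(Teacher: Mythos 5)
Your proposal is correct and follows essentially the same route as the paper: apply Corollary~\ref{cor:balance} to the $m^*$ PMs of the optimal solution, then use $\OPTBP \leq m^*$ together with $\l(D)/\OPTBP \leq C \leq x^*$ to argue that the balanced cost only decreases when the number of PMs is reduced to $\OPTBP$. Your explicit differentiation of $g(k)$ and identification of the critical point $k_0=\l(D)/x^*$ just makes rigorous the monotonicity step that the paper asserts in words.
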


\begin{proof}
Denote the number of PMs used in an optimal \CVMA solution $\pi^*$ by $m^*$. By Corollary~\ref{cor:balance}, we know that
$P(\pi^*)  \geq m^*b + m^*  (\l(D)/m^*)^{\alpha}$.
Given that $\OPTBP \leq m^*$, we know that $\l(D)/m^* \leq \l(D)/\OPTBP \leq  C\leq x^*$. Thus, for evenly-balanced loads the power consumption is reduced if the number of PMs is reduced, that is
$m^* b + m^*  (\l(D)/m^*)^{\alpha} \geq \OPTBP \cdot b + \OPTBP  (\l(D)/\OPTBP)^{\alpha}$.
Hence, the claim follows.
\qed
\end{proof}

Now we prove an upper bound on the approximation ratio showing a reduction to bin packing~\cite{gareyjohnson}. The reduction works as follows. Let each PM be seen as a bin of capacity $C$, and each VM be seen as an object to be placed in the bins, whose size is the VM load. Then, a solution for this bin packing problem instance yields a feasible (perhaps suboptimal) solution for the instance of \CVMA.
Moreover, using any bin-packing approximation algorithm, we obtain a feasible solution for \CVMA that approximates the minimal number of PMs used.
The power consumption of this solution approximates the power consumption of the optimal solution $\pi^*$ of the instance of \CVMA.
In order to compute an upper bound on the approximation ratio of this algorithm, we will compare the power consumption of such solution against a lower bound on the power consumption of $\pi^*$.
The following theorem shows the approximation ratio obtained.

\begin{theorem}
\label{theo:ubxlargercBP}
For every $\epsilon >0$, there exists an approximation algorithm for the \CVMA problem when $x^*\geq C$ that achieves an approximation ratio of
\begin{equation*}
\rho < 1+\epsilon + \frac{ C^\alpha}{b} +  \frac{1}{\OPTBP},
\end{equation*}
where $\OPTBP$ is the minimum number of PMs required to allocate all the VMs.
\end{theorem}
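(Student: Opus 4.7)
The plan is to reduce \CVMA to bin packing and exploit the asymptotic PTAS (APTAS) for bin packing of~\cite{binpackingapprox}. Treating each PM as a bin of capacity $C$ and each VM as an item of size $\ell(d)$, the APTAS outputs, for any $\epsilon>0$, a feasible packing using at most $m_{\mathrm{ALG}}\leq (1+\epsilon)\OPTBP + 1$ bins. Because every resulting subset $A_i$ satisfies $\ell(A_i)\leq C$, this packing is immediately a feasible partition $\pi_{\mathrm{ALG}}$ for \CVMA, so it suffices to compare $P(\pi_{\mathrm{ALG}})$ with a lower bound on $P(\pi^*)$.

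For the upper bound I would split $P(\pi_{\mathrm{ALG}})$ into its fixed part and its load-dependent part. The fixed part is at most $m_{\mathrm{ALG}}\cdot b$. For the load-dependent part, since $\ell(A_i)\leq C$ and $\alpha>1$, I would use $\ell(A_i)^\alpha\leq C^{\alpha-1}\ell(A_i)$ and sum over non-empty bins to obtain
\[
\sum_{i:\, A_i\neq\emptyset} \ell(A_i)^\alpha \;\leq\; C^{\alpha-1}\ell(D) \;\leq\; C^{\alpha-1}\cdot \OPTBP\cdot C \;=\; \OPTBP\cdot C^\alpha,
\]
where the second inequality uses $\ell(D)\leq \OPTBP\cdot C$ by the very definition of $\OPTBP$. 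Thus $P(\pi_{\mathrm{ALG}})\leq m_{\mathrm{ALG}}\, b + \OPTBP\cdot C^\alpha$.

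For the lower bound on the optimum I would invoke Lemma~\ref{lem:lbXlargerC}, which yields $P(\pi^*)\geq \OPTBP\cdot b + \OPTBP(\ell(D)/\OPTBP)^\alpha \geq \OPTBP\cdot b$. Dropping the positive second term from the denominator only weakens the inequality, so
\[
\rho \;\leq\; \frac{m_{\mathrm{ALG}}\, b + \OPTBP\cdot C^\alpha}{\OPTBP\cdot b} \;=\; \frac{m_{\mathrm{ALG}}}{\OPTBP} + \frac{C^\alpha}{b} \;\leq\; 1+\epsilon + \frac{1}{\OPTBP} + \frac{C^\alpha}{b},
\]
which is exactly the claimed bound.

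The main subtlety is that the load-dependent part must be bounded in terms of $\OPTBP$ rather than $m_{\mathrm{ALG}}$: a naive bound like $m_{\mathrm{ALG}}\cdot C^\alpha$ would carry a spurious $(1+\epsilon)$ factor multiplying $C^\alpha/b$ and break the clean additive decomposition in the target ratio. Leveraging $\ell(D)\leq \OPTBP\cdot C$ (a property of the input, not of the algorithm's output) avoids this and lets the APTAS guarantee on $m_{\mathrm{ALG}}$ interact cleanly with the fixed-cost term alone, after which the calculation is routine.
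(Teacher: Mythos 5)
Your proof is correct, and it follows the paper's overall strategy: reduce to bin packing, invoke the APTAS of~\cite{binpackingapprox} to get $\widehat{m}\leq(1+\epsilon)\OPTBP+1$ bins, upper-bound the algorithm's cost by $\widehat{m}b+C^{\alpha-1}\ell(D)$ (the paper writes this as $\widehat{m}b+(\ell(D)/C)C^\alpha$, justified via Lemma~\ref{l-balance}; your pointwise bound $\ell(A_i)^\alpha\leq C^{\alpha-1}\ell(A_i)$ gives the same thing more directly), and lower-bound the optimum via Lemma~\ref{lem:lbXlargerC}. Where you genuinely diverge is in the final calculation: the paper retains the load term in the denominator, argues $\ell(D)/\OPTBP>C/2$ to replace it by $\OPTBP(C/2)^\alpha$, substitutes $b=\gamma C^\alpha(\alpha-1)$ with $\gamma=(x^*/C)^\alpha$, and only discards the $1/2^\alpha$ contribution in the last step of a longer chain of inequalities. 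You discard the load term from $P(\pi^*)\geq\OPTBP\,b+\OPTBP(\ell(D)/\OPTBP)^\alpha$ at the outset, keeping only $\OPTBP\,b$, after which the ratio splits additively in two lines. Since the paper's extra denominator term is thrown away anyway before reaching the stated bound, your shortcut loses nothing and also removes the need for the $\ell(D)/\OPTBP>C/2$ observation and for the separate treatment of the case $\ell(D)\leq C$ (your bound holds there too, since $\OPTBP=1$ makes the $1/\OPTBP$ term absorb the possible extra bin). Your closing remark about bounding the load part by $\OPTBP\cdot C^\alpha$ rather than $\widehat{m}\cdot C^\alpha$ correctly identifies the one place where care is needed to avoid a spurious $(1+\epsilon)$ factor on $C^\alpha/b$.
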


\begin{proof}
Consider an instance of the \CVMA problem. 
If $\l(D) \leq C$, the optimal solution is to place all the VMs in one single PM. Hence, we assume in the rest
of the proof that $\l(D)>C$.
Define the corresponding instance of bin packing following the reduction described above.
Let the optimal number of bins to accommodate all VMs be $\OPTBP$.
As shown in~\cite{binpackingapprox}, for every $\epsilon>0$, there is a polynomial-time algorithm that fits all VMs in $\widehat{m}$ bins,
where $\widehat{m} \leq (1+\epsilon)\OPTBP+1$.
From Lemma~\ref{l-balance}, once the number of PMs used $\widehat{m}$ is fixed, the power consumption
is maximized when the load is unbalanced to the maximum.
I.e., the power consumption of the assignment is at most $\widehat{m}b+(\l(D)/C) C^\alpha$.
On the other hand, as shown in Lemma~\ref{lem:lbXlargerC}, the power consumption of the optimal \CVMA solution is at least
$\OPTBP \cdot b+\OPTBP \left(\frac{\l(D)}{\OPTBP}\right)^\alpha$.
Then, we compute a bound on the approximation ratio as follows.
\begin{equation}
\label{eq:upboundXlargerC}
\rho \leq  \frac{\widehat{m}b+\left(\frac{\l(D)}{C}\right) C^\alpha}{\OPTBP \cdot b+\OPTBP \left(\frac{\l(D)}{\OPTBP}\right)^\alpha}
< \frac{\widehat{m}b+\left(\frac{\l(D)}{C}\right) C^\alpha}{\OPTBP \cdot b+\OPTBP \left(\frac{C}{2}\right)^\alpha},
\end{equation}
where the second inequality comes from $\l(D)/\OPTBP > C/2$. (If $\l(D)/\OPTBP \leq C/2$, there must be two PMs whose loads add up to less than $C$, which contradicts the fact that $\OPTBP$ is the number of bins used in the optimal solution of bin packing.)
Let $\gamma=(x^*/C)^\alpha$. Then, replacing $b=\gamma C^\alpha(\alpha-1)$, in Eq.~(\ref{eq:upboundXlargerC}) we have
\begin{eqnarray}
\rho
& < &  \frac{\widehat{m}\gamma C^\alpha(\alpha-1)+\left(\frac{\l(D)}{C}\right) C^\alpha}{\OPTBP \gamma C^\alpha(\alpha-1)+\OPTBP \left(\frac{C}{2}\right)^\alpha} \nonumber  \\ 
  &=&  \frac{\widehat{m}\gamma(\alpha-1)+\left(\frac{\l(D)}{C}\right)}{\OPTBP \gamma(\alpha-1)+\OPTBP \left(\frac{1}{2}\right)^\alpha} 
\leq   \frac{\widehat{m}\gamma(\alpha-1)+\OPTBP}{\OPTBP \gamma(\alpha-1)+\left(\frac{\OPTBP}{2^\alpha}\right)} \label{eq--9} \\
 & \leq &  \frac{(\OPTBP(1+\epsilon)+1))\gamma(\alpha-1)+\OPTBP}{\OPTBP\gamma(\alpha-1)+\left(\frac{\OPTBP}{2^\alpha}\right)} \label{eq--10} \\
 & = &  \frac{(1+\epsilon)\gamma(\alpha-1)+1}{\gamma(\alpha-1)+\left(\frac{1}{2^\alpha}\right)} +  \frac{\gamma(\alpha-1)}{\OPTBP\gamma(\alpha-1)+\left(\frac{\OPTBP}{2^\alpha}\right)} \nonumber \\
 & = & \frac{2^\alpha((1+\epsilon)\gamma(\alpha-1)+1)}{2^\alpha\gamma(\alpha-1)+1} +  \frac{2^\alpha\gamma(\alpha-1)}{\OPTBP ( 2^\alpha \gamma(\alpha-1)+1)} \nonumber \\
 & < &  \frac{(1+\epsilon)\gamma(\alpha-1)+1}{\gamma(\alpha-1)} +  \frac{1}{\OPTBP} \nonumber \\
 &=&   1+\epsilon + \frac{1}{\gamma(\alpha-1)} +  \frac{1}{\OPTBP} = 1+\epsilon + \frac{ C^\alpha}{b} +  \frac{1}{\OPTBP} \nonumber
\end{eqnarray}
Inequality (\ref{eq--9}) follows from $\l(D)/C \leq \OPTBP$, Inequality (\ref{eq--10}) from the approximation algorithm for bin packing,
and the last inequality is because $\OPTBP > 0$.
\qed
\end{proof}

\subsubsection{Upper bound on the approximation ratio for $x^* < C$}

We study now the \CVMA problem when $x^* < C$. In this case, the optimal load per PM is less than its capacity, so an optimal solution would load every PM to $x^*$ if possible, or try to balance the load close to $x^*$. In this case we slightly modify the bin packing algorithm described above, reducing the bin size from $C$ to $x^*$. Then, using an approximation algorithm for this bin packing problem, the following theorem can be shown.

\begin{theorem}
\label{theo:ubxsmallercBP}
For every $\epsilon >0$, there exists an approximation algorithm for the \CVMA problem when $x^* < C$ that achieves an approximation ratio of
\begin{equation*}
\rho < \frac{\OPTBP}{m^*}\left((1+\epsilon)+\frac{1}{\alpha-1}\right) + \frac{1}{m^*},
\end{equation*}
where $m^*$ is the number of PMs used by the optimal solution of \CVMA, and $\OPTBP$ is the minimum number of PMs required to allocate all the VMs
without exceeding load $x^*$ (i.e., the optimal solution of the bin packing problem).
\end{theorem}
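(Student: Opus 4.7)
The plan is to adapt the proof of Theorem~\ref{theo:ubxlargercBP} by shrinking the auxiliary bin-packing bin size from $C$ to $x^*$. Because $x^* < C$, any packing with bin capacity $x^*$ is automatically a feasible \CVMA solution, so the resulting algorithm is well defined.

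The first step is to invoke the APTAS of \cite{binpackingapprox} on the bin-packing instance with bin capacity $x^*$ and object sizes $\ell(d_j)$. For any fixed $\epsilon > 0$, this returns in polynomial time an assignment using $\widehat{m} \leq (1+\epsilon)\OPTBP + 1$ PMs, each with load at most $x^*$. The second step is to upper bound the power of this assignment. Writing the cost as $\widehat{m}b + \sum_{i : A_i \neq \emptyset} \ell(A_i)^\alpha$ and using $\ell(A_i)^\alpha \leq \ell(A_i)(x^*)^{\alpha-1}$ for each non-empty $A_i$, the total power is at most $\widehat{m}b + \ell(D)(x^*)^{\alpha-1}$. (Equivalently, Lemma~\ref{l-balance} can be applied to upper bound the algorithm's cost by the most unbalanced distribution respecting the $x^*$ cap, which gives the same expression.)

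The third step is to lower bound $P(\pi^*)$. Here I would use the weak consequence of Corollary~\ref{cor:balance} that $P(\pi^*) \geq m^* b$. Combining the two bounds yields
\[
\rho \;\leq\; \frac{\widehat{m}}{m^*} + \frac{\ell(D)(x^*)^{\alpha-1}}{m^* b}.
\]
Substituting $b = (x^*)^\alpha(\alpha-1)$ from Observation~\ref{opt-load} simplifies the second term to $\ell(D)/\bigl(m^* x^* (\alpha-1)\bigr)$. Then, using $\ell(D) \leq \OPTBP \cdot x^*$ (since $\OPTBP$ bins of size $x^*$ suffice to hold the total load) together with $\widehat{m} \leq (1+\epsilon)\OPTBP + 1$, a one-line rearrangement yields the claimed ratio.

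The main obstacle is not conceptual but lies in the choice of the lower bound on $P(\pi^*)$: the seemingly wasteful bound $m^* b$, which discards the $m^*(\ell(D)/m^*)^\alpha$ term of Corollary~\ref{cor:balance}, is precisely what produces the $1/(\alpha-1)$ factor in the statement. A tighter lower bound would change the form of the result (e.g.\ replacing $1/(\alpha-1)$ by $1/\alpha$) and would obscure the clean decomposition into $\OPTBP/m^*$ and $1/m^*$ contributions, so it is worth confirming that this weak bound is sufficient before committing to the algebraic simplification.
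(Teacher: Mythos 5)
Your proposal is correct and follows essentially the same route as the paper: the same APTAS reduction with bin capacity $x^*$, the same upper bound $\widehat{m}b + \ell(D)(x^*)^{\alpha-1}$ on the algorithm's cost, and the same substitutions $b=(x^*)^\alpha(\alpha-1)$ and $\ell(D)\leq\OPTBP\cdot x^*$. The only difference is cosmetic: the paper carries the denominator term $m^*(\ell(D)/m^*)^\alpha \geq m^*(x^*/2)^\alpha$ (justified via Lemma~\ref{lem:xsmallerC}) through the algebra and discards it in the final step, whereas you drop it at the outset by taking $P(\pi^*)\geq m^*b$ directly --- which, as you correctly observe, is exactly the slack that produces the $1/(\alpha-1)$ term, so the two derivations arrive at the identical bound.
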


\begin{proof}
Consider an instance of the \CVMA problem.
If $\l(D) \leq x^*$ then the optimal solution is to assign all the VMs to one
single PM. Then, in the rest of the proof we assume that $\l(D)>x^*$.
Assuming $m^*$ to be the number of PMs of an optimal \CVMA solution $\pi^*$ for load $\l(D)$, from Corollary \ref{cor:balance}, we can claim that the power consumption $P(\pi^*)$ can be bounded as $P(\pi^*) \geq m^*b+m^*(\l(D)/m^*)^\alpha$.

Now, let $\OPTBP$ be the minimum number of PMs required to allocate all the VMs of the \CVMA problem
without exceeding load $x^*$.
As shown in~\cite{binpackingapprox}, for every $\epsilon>0$, there is a polynomial-time algorithm that fits all VMs in $\widehat{m}$ bins,
where $\widehat{m} \leq (1+\epsilon)\OPTBP+1$.
From Lemma~\ref{l-balance}, this approximation results in a power consumption no larger than $\widehat{m}b+(\l(D)/x^*) (x^*)^\alpha$.
Hence, the approximation ratio $\rho$ of the solution obtained wit this algorithm can be bounded as follows.
\begin{equation}
\label{eq:upboundXsmallerC}
\rho \leq  \frac{\widehat{m}b+\left(\frac{\l(D)}{x^*}\right) (x^*)^\alpha}{m^*b+m^*\left(\frac{\l(D)}{m^*}\right)^\alpha}.
\end{equation}
Since $\l(D)>x^*$, we know that $\l(D)/m^* > x^*/2$, since otherwise there are two used PMs whose load is no larger than $x^*$, contradicting
by Lemma~\ref{lem:xsmallerC} the definition of $m^*$.
Also, from the definition of $\OPTBP$, it follows that $\l(D)\leq \OPTBP \cdot x^*$. Finally, recall that $b=(x^*)^\alpha(\alpha-1)$. Applying these results to Eq. (\ref{eq:upboundXsmallerC}) we have the following.
\begin{eqnarray*}
\rho & < &  \frac{\widehat{m} (x^*)^\alpha(\alpha-1)+\left(\frac{x^* \OPTBP}{x^*}\right) (x^*)^\alpha}{m^* (x^*)^\alpha(\alpha-1)+m^*\left(\frac{x^*}{2}\right)^\alpha} \\
  &=&  \frac{\widehat{m}(\alpha-1)+\OPTBP}{m^*(\alpha-1)+m^*\left(\frac{1}{2}\right)^\alpha} 
 \leq  \frac{(\OPTBP(1+\epsilon)+1)(\alpha-1)+\OPTBP}{m^*(\alpha-1)+\frac{m^*}{2^\alpha}}\\
 & = &  \frac{\OPTBP (1+\epsilon)(\alpha-1)+\OPTBP}{m^*(\alpha-1)+\frac{m^*}{2^\alpha}} +  \frac{\alpha-1}{m^*(\alpha-1)+\frac{m}{2^\alpha}}\\
 & = & \frac{\OPTBP}{m^*}\frac{2^\alpha((1+\epsilon)(\alpha-1)+1)}{2^\alpha(\alpha-1)+1} +  \frac{2^\alpha(\alpha-1)}{2^\alpha m^*(\alpha-1)+m^*}\\
 & \leq & \frac{\OPTBP}{m^*}\left((1+\epsilon)+\frac{1}{\alpha-1}\right) + \frac{1}{m^*},\\
\end{eqnarray*}
where the first inequality comes from applying the results aforementioned, and second one from using $\widehat{m}=\OPTBP(1+\epsilon)+1$, while the last one results from simplifying the previous equation.
\qed
\end{proof}

\section{Online Analysis}
\label{sec:comp}
In this section, we study the online version of the VMA problem, i.e.,  when the VMs are revealed one by one. We first study lower bounds and then provide online algorithms and prove upper bounds on their competitive ratio.







	\subsection{Lower Bounds}
	\label{sec:online-lower}

In this section, we compute lower bounds on the competitive ratio for \VMA, \CVMA, \MVMA, \CMVMA and $(\cdot,2)$-VMA problems. We start with one general construction that is used to obtain
lower bounds on the first four cases. Then, we develop special constructions for \MVMA and  $(\cdot,2)$-VMA that improve the lower bounds for these two problems.

\subsubsection{General Construction}
We prove lower bounds on the competitive ratio of \VMA, \CVMA, \MVMA and \CMVMA problems. These lower bounds are shown in the following two theorems. In Theorem~\ref{thm:online-lower-vma-cvma-bigc}, we prove a
lower bound on the competitive ratio that is valid in the cases when $C$ is unbounded and when it is larger or equal than $x^*$. The case $C \leq x^*$ is covered in Theorem \ref{thm:online-lower-cvma-smallc}.

\begin{theorem}
\label{thm:online-lower-vma-cvma-bigc}
	There exists an instance of problems \VMA, \MVMA, \CVMA and \CMVMA when $C > x^*$, such that no online algorithm can guarantee a competitive ratio smaller than $\frac{(3/2)2^\alpha-1}{2^\alpha-1}$.
\end{theorem}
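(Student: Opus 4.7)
The plan is to give an adaptive-adversary argument: for any deterministic online algorithm $\mathcal{A}$, I will describe a short sequence of VMs that forces $\mathcal{A}$'s cost to be at least $\frac{(3/2)2^\alpha-1}{2^\alpha-1}$ times the optimum. Since the construction will only require large (or unbounded) capacity and a small number of physical machines, the same instance simultaneously witnesses the bound for \VMA, \MVMA, \CVMA and \CMVMA whenever $C>x^*$ or $C$ is unbounded.

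The high-level strategy is two-round. In the first round the adversary presents two identical VMs of some load $L$, chosen (as a function of $b$ and $\alpha$) to sit near the boundary at which the offline options ``single PM at load $2L$'' and ``two PMs at load $L$ each'' become equally costly; by Lemma~\ref{lem:xsmallerC} and Lemma~\ref{l-balance}, these two configurations are the only offline candidates that need to be considered. In the second round the adversary's behaviour depends on what $\mathcal{A}$ did with the second VM: if $\mathcal{A}$ consolidated the two VMs onto one PM, the adversary halts and compares $\mathcal{A}$'s cost $(2L)^\alpha+b$ against the optimum $2L^\alpha+2b$ achieved by the spread configuration; if $\mathcal{A}$ spread them onto two PMs, the adversary releases one additional, carefully-sized VM, locking $\mathcal{A}$ into a configuration whose cost is dominated by the $2b$ it has already committed to, while OPT can still exploit consolidation of the original pair and pay only roughly $(2L)^\alpha+b$.

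In both branches the resulting ratio depends only on $L^\alpha/b$. I would then equate the two branch ratios, which after absorbing $b$ into $L^\alpha$ at the boundary $L^\alpha(2^\alpha-2)\approx b$ reduces to a single linear equation whose solution yields exactly the common value $\frac{(3/2)2^\alpha-1}{2^\alpha-1}$. Since the adversary picks the branch matching $\mathcal{A}$'s response, $\mathcal{A}$ must incur at least this ratio on the constructed instance.

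The main obstacle I expect is the precise tuning of $L$ together with the load of the follow-up VM so that the two branches give exactly the same ratio: any asymmetric choice produces only the weaker of the two bounds. The arithmetic itself is elementary manipulation of expressions involving $2^\alpha$ and $b$, but it is important to verify, via Lemmas~\ref{lem:xsmallerC} and~\ref{l-balance}, that the partition I call OPT in each branch is actually optimal over all offline assignments of the (at most three) presented VMs, and that $L\le C$ whenever $C$ is finite so that the construction is feasible for all four variants in the theorem's statement.
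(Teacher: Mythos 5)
There is a genuine gap: the two branches of your adversary impose contradictory requirements on $L$, and no choice of $L$ (or of the follow-up load $M$) can satisfy both. To punish consolidation in the first branch you need the spread configuration $\{L\},\{L\}$ to be strictly cheaper offline than $\{L,L\}$, i.e.\ $2L^\alpha+2b<(2L)^\alpha+b$, which forces $L^\alpha>b/(2^\alpha-2)$; worse, at that boundary the two offline costs coincide, so the branch-A ratio $\frac{(2L)^\alpha+b}{2L^\alpha+2b}$ equals $1$ there and only reaches the target $\frac{(3/2)2^\alpha-1}{2^\alpha-1}$ for $L$ far above the boundary (for $\alpha=2$, $b=1$ you need $L^2\geq 3.5$ against a boundary of $L^2=1/2$). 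But for such $L$ the second branch collapses: after spreading, the algorithm sits in the perfectly balanced configuration $\{L\},\{L\}$, and the only offline partitions of $\{L,L,M\}$ it cannot reach are those that co-locate the two loads $L$ --- and since $(2^\alpha-2)L^\alpha>b$ is precisely the condition under which splitting two equal loads is cheaper, co-locating them is never optimal (e.g.\ $2L^\alpha+M^\alpha+3b<(2L)^\alpha+M^\alpha+2b$). Hence the offline optimum is attained by a partition the algorithm can also reach, and the branch-B ratio is $1$: an algorithm that simply refuses to merge the two VMs beats your adversary on every continuation.

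The paper's construction avoids this tension by streaming VMs of infinitesimal load $\epsilon x^*$ until the online algorithm opens a second PM. At that instant the algorithm holds essentially the entire load $T$ on one PM and a negligible load on the other, so it pays roughly $T^\alpha+2b$: the load profile of the consolidated offline solution combined with the activation cost of the spread one. The offline optimum pays $\min\{T^\alpha+b,\,2(T/2)^\alpha+2b\}$, and a monotonicity argument in $T$ shows the ratio is minimized exactly where these two coincide, yielding $\frac{(3/2)2^\alpha-1}{2^\alpha-1}$; if the algorithm never opens a second PM the ratio diverges. This ``worst of both worlds'' configuration --- all the load on one machine plus a wasted $b$ --- is precisely what your two-identical-VM instance cannot create, because spreading two equal VMs leaves the algorithm perfectly balanced rather than wastefully unbalanced. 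To repair your argument you would have to replace the two large VMs by a long sequence of small ones, at which point you recover the paper's proof.
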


\begin{proof}
We consider a scenario where, for any online algorithm, an adversary injects VMs of size $\epsilon x^*$ ($\epsilon>0$ is an arbitrarily small constant) to the system until the algorithm starts up a new PM. Let us assume that the total number of VMs injected is $k$. According to the adversary's behavior, the assignment of the VMs should be that all the VMs except one are allocated to a single PM while the second PM has only one VM. Depending on what the optimal solution is, we discuss the following two cases: \\
\emph{Case 1:} $k \leq \frac{1}{\epsilon}\left(\frac{\alpha-1}{1-2^{1-\alpha}}\right)^{1/\alpha}$. The optimal solution will allocate all the VMs to a single PM. Consequently, the competitive ratio of the online algorithm satisfies
\begin{equation*}
	\rho(k) \geq \lim_{\epsilon \rightarrow 0}\left(\frac{\left((k-1)\epsilon x^*\right)^{\alpha} +(\epsilon x^*)^{\alpha} + 2b}{\left(k\epsilon x^*\right)^{\alpha}+b}\right).
\end{equation*}
It can be easily verified that function $\rho(k)$ is monotone decreasing with $k$. That is, $\rho(k)$ is minimized when $k=\frac{1}{\epsilon}\left(\frac{\alpha-1}{1-2^{1-\alpha}}\right)^{1/\alpha}$. As a result, we obtain,
\begin{align*}
	\rho(k) &\geq  \lim_{\epsilon \rightarrow 0}\left(\frac{\left(\left(\frac{\alpha-1}{1-2^{1-\alpha}}\right)^{1/\alpha}x^*\right)^{\alpha} + (\epsilon x^*)^{\alpha}+ 2b}{\left(\left(\frac{\alpha-1}{1-2^{1-\alpha}}\right)^{1/\alpha}x^*\right)^{\alpha}+b}\right) \\
	&= \frac{\left(\left(\frac{\alpha-1}{1-2^{1-\alpha}}\right)^{1/\alpha}x^*\right)^{\alpha} + 2(x^*)^{\alpha}(\alpha-1)}{\left(\left(\frac{\alpha-1}{1-2^{1-\alpha}}\right)^{1/\alpha}x^*\right)^{\alpha}+(x^*)^{\alpha}(\alpha-1)} \\
	&= \frac{3-2^{1-\alpha}}{2-2^{1-\alpha}}=\frac{(3/2)2^\alpha-1}{2^\alpha-1}.
\end{align*}
\emph{Case 2:} $k > \frac{1}{\epsilon}\left(\frac{\alpha-1}{1-2^{1-\alpha}}\right)^{1/\alpha}$. The optimal solution will use two PMs with $k/2$ PMs assigned to each PM. Accordingly, the competitive ratio of the online algorithm satisfies
\begin{equation*}
	\rho(k) \geq \lim_{\epsilon \rightarrow 0}\left(\frac{\left((k-1)\epsilon x^*\right)^{\alpha} + (\epsilon x^*)^{\alpha} + 2b}{2\left(\frac{k\epsilon x^*}{2}\right)^{\alpha}+2b}\right).
\end{equation*}
Similarly, we observe that $\rho(k)$ is monotone increasing with $k$. Consequently, the following inequality applies.
\begin{align*}
	\rho(k) & \geq  \lim_{\epsilon \rightarrow 0}\left(\frac{\left(\left(\frac{\alpha-1}{1-2^{1-\alpha}}\right)^{1/\alpha}x^*\right)^{\alpha} + (\epsilon x^*)^{\alpha}+ 2b}{2\left(\frac{1}{2}\left(\frac{\alpha-1}{1-2^{1-\alpha}}\right)^{1/\alpha}x^*\right)^{\alpha}+2b}\right) \\
	&= \frac{\left(\left(\frac{\alpha-1}{1-2^{1-\alpha}}\right)^{1/\alpha}x^*\right)^{\alpha} + 2(x^*)^{\alpha}(\alpha-1)}{2\left(\frac{1}{2}\left(\frac{\alpha-1}{1-2^{1-\alpha}}\right)^{1/\alpha}x^*\right)^{\alpha}+ 2(x^*)^{\alpha}(\alpha-1)} \\
	&= \frac{3-2^{1-\alpha}}{2-2^{1-\alpha}}
	=\frac{(3/2)2^\alpha-1}{2^\alpha-1}
\end{align*}
Note that it can also happen that $C < \left(\frac{\alpha-1}{1-2^{1-\alpha}}\right)^{1/\alpha} x^*$. In this case, $k$ is smaller than $\frac{1}{\epsilon}\left(\frac{\alpha-1}{1-2^{1-\alpha}}\right)^{1/\alpha}$. Therefore, the competitive ratio
 is always larger than $\frac{(3/2)2^\alpha-1}{2^\alpha-1}$, proving the lower bound.
 \qed
\end{proof}

\begin{theorem}
\label{thm:online-lower-cvma-smallc}
There exists an instance of problems \CVMA and \CMVMA  when $C \leq x^*$ such that no online algorithm can guarantee a competitive ratio smaller than $(C^\alpha + 2b)/(b+\max(C^\alpha,2(C/2)^\alpha+b))$.
%
\end{theorem}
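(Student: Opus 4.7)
The plan is to mirror the adversarial construction of Theorem~\ref{thm:online-lower-vma-cvma-bigc}, but this time exploit the fact that the PM capacity $C$ is the tight constraint instead of the optimal load $x^*$. For a given online algorithm, the adversary injects virtual machines of arbitrarily small load $\epsilon>0$ one at a time and halts as soon as the algorithm has placed a VM on a second PM (possibly after being forced to do so by capacity). Since $C\leq x^*$, Lemma~\ref{lem:xsmallerC} guarantees that whenever the total injected load does not exceed $C$, the unique optimum places all the VMs on a single PM; this will pin down OPT in both cases.

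First I would handle the \emph{early-opening} case, in which the algorithm voluntarily opens a second PM when the accumulated load is some value $L\leq C$. At that moment PM1 carries load $L-\epsilon$ while PM2 carries $\epsilon$, so the algorithm's cost tends to $L^\alpha+2b$ as $\epsilon\to 0$, whereas OPT pays $L^\alpha+b$. The ratio $(L^\alpha+2b)/(L^\alpha+b)$ is strictly decreasing in $L$, so the algorithm's best choice within this regime is to delay the split until $L$ is as large as possible, yielding at best a ratio of $(C^\alpha+2b)/(C^\alpha+b)$.

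Next I would treat the \emph{late-opening} case, in which the algorithm keeps filling PM1 until it reaches load $C$. The adversary then injects one additional $\epsilon$-VM; capacity forces it onto a second PM. The algorithm pays $C^\alpha+\epsilon^\alpha+2b\to C^\alpha+2b$. Since the total load $C+\epsilon$ now exceeds $C$, OPT is compelled to use at least two PMs, and Lemma~\ref{l-balance} implies that the minimum is achieved by splitting the load as evenly as possible, giving OPT $\to 2(C/2)^\alpha+2b$. The resulting ratio is $(C^\alpha+2b)/(2(C/2)^\alpha+2b)$.

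These two regimes exhaust the algorithm's possibilities, so the adversary ensures the competitive ratio is at least the minimum of the two bounds. A direct calculation gives
\[
\min\!\left\{\frac{C^\alpha+2b}{C^\alpha+b},\ \frac{C^\alpha+2b}{2(C/2)^\alpha+2b}\right\}=\frac{C^\alpha+2b}{\max\{C^\alpha+b,\ 2(C/2)^\alpha+2b\}}=\frac{C^\alpha+2b}{b+\max\{C^\alpha,\ 2(C/2)^\alpha+b\}},
\]
matching the statement. The construction only ever uses two PMs, so it applies to \CMVMA provided $m\geq 2$. The main obstacle I anticipate is purely bookkeeping: carrying the $\epsilon\to 0$ limit cleanly through the inequalities, and arguing rigorously that any online strategy on the $\epsilon$-VM stream falls into exactly one of these two mutually exclusive regimes (opening a second PM before saturating the first, or being forced to open one right after).
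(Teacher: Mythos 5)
Your proposal is correct and follows essentially the same route as the paper: an adversary streams arbitrarily small VMs until the algorithm opens a second PM, with the two regimes (voluntary early split while total load is at most $C$, versus a split forced by the capacity constraint) yielding the ratios $(C^\alpha+2b)/(C^\alpha+b)$ and $(C^\alpha+2b)/(2(C/2)^\alpha+2b)$ respectively, whose minimum is the stated bound. The paper phrases the case split in terms of the number $k$ of injected VMs relative to $1/\epsilon$ and uses the same monotonicity and load-balancing (Corollary~\ref{cor:balance}/Lemma~\ref{l-balance}) arguments you invoke.
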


\begin{proof}
Similarly to the proof of Theorem~\ref{thm:online-lower-vma-cvma-bigc}, we prove the result by considering an adversarial injection of VMs of size $\epsilon C$. This injection stops when a new PM started up by an online algorithm. We discuss the following two cases: \\
\emph{Case 1:} $k \leq 1/\epsilon $. In this case, the optimal algorithm will assign all the VMs to a single PM. The competitive ratio of the online algorithm satisfies
\begin{align*}
	\rho(k)  & \geq \lim_{\epsilon \rightarrow 0} \frac{\left((k-1)\epsilon C\right)^{\alpha} + (\epsilon C)^{\alpha} + 2b}{(k \epsilon C)^\alpha + b} \\
		& \geq \lim_{\epsilon \rightarrow 0} \frac{(1-\epsilon)^{\alpha} C^{\alpha} + 2b}{C^{\alpha} + b} \\
		& \geq \frac{C^{\alpha} + 2b}{C^{\alpha} + b} \geq 2 - \frac{1}{\alpha} 
\end{align*}
The second inequality results from applying $k \leq 1/\epsilon$, which is observed from the monotone decreasing property of function $\rho(k)$. 
The last inequality comes from computing the limit when $\epsilon$ goes to $0$ and by applying $b \geq C^{\alpha}(\alpha - 1)$. \\
\emph{Case 2:} $k > 1/\epsilon$. In this case, the adversary stops injecting VMs as there will be, mandatorily, two active PMs, one of them not capable to allocate more VMs and the second one hosting one single VM. Since all the VMs can not be consolidated to a single PM. The optimal solution would use also two PMs but evenly balancing the loads among them. The competitive ratio of the online algorithm satisfies
\begin{align*}
	\rho(k) & = \lim_{\epsilon \rightarrow 0}\left(\frac{\left((k-1)\epsilon C\right)^{\alpha} + (\epsilon C)^{\alpha} + 2b}{2\left(\frac{k\epsilon C}{2}\right)^{\alpha}+2b}\right) \\
	        & = \lim_{\epsilon \rightarrow 0}\left(\frac{C^\alpha + (\epsilon C)^{\alpha} + 2b}{2\left(\frac{C+\epsilon C}{2}\right)^{\alpha}+2b}\right)
	         = \frac{C^\alpha + 2b}{2\left(\frac{C}{2}\right)^{\alpha}+2b}.
\end{align*}
Hence, combining the results from both cases $1$ and $2$ we obtain the bound presented in Theorem \ref{thm:online-lower-cvma-smallc}.
\qed
\end{proof}

\subsubsection{Special Constructions for \MVMA and  $(\cdot,2)$-VMA} We show first that for $m$ PMs 
there is a lower bound on the competitive ratio that improves the previous lower bound when $\alpha > 4.5$. Secondly, we prove a particular lower bound for problem $(\cdot,2)$-VMA, that improves the previous lower bound when $\alpha > 3$.
\begin{theorem}
\label{thm:lbmservers}
There exists an instance of problem \MVMA such that no online algorithm can guarantee a competitive ratio smaller than $3^\alpha/(2^{\alpha+2} + \epsilon)$ for any $\epsilon > 0$.
\end{theorem}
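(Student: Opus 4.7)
The plan is to adapt the incremental adversarial construction used in Theorem~\ref{thm:online-lower-vma-cvma-bigc}, but push it further so that the adversary exploits the fact that $(\cdot,m)$-VMA has an arbitrary number of PMs available to the offline optimum. As before, the adversary injects identical VMs of load $\varepsilon\cdot x^*$ one by one, where $\varepsilon>0$ is an arbitrarily small constant, and uses the algorithm's decisions to choose when to stop. The new twist, relative to Theorem~\ref{thm:online-lower-vma-cvma-bigc}, is to continue the injection until the aggregated load is large enough that the optimal strategy would naturally spread it over four PMs with load approximately $2x^*$ each (yielding an OPT cost proportional to $2^{\alpha+2}(x^*)^\alpha$), while forcing the algorithm into a configuration whose cost is proportional to $3^\alpha(x^*)^\alpha$. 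The parameter $m$ only needs to be large enough (at least $4$) so that OPT can use the configuration described.

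First, I would formalize the adversary: inject VMs of size $\varepsilon x^*$ and stop as soon as the algorithm opens a new PM, recording the load on the PM that was being filled. Then repeat this \emph{per stage}, one stage per PM the algorithm decides to open. By an averaging argument over stages (analogous to how $k$ in the proof of Theorem~\ref{thm:online-lower-vma-cvma-bigc} is tuned to the threshold $(\frac{\alpha-1}{1-2^{1-\alpha}})^{1/\alpha}$), one can show that, up to an additive $O(\varepsilon)$ error which can be absorbed in the $\epsilon$ of the bound's denominator, there is a stopping time at which the algorithm has cost at least $c\cdot 3^\alpha(x^*)^\alpha$ (for some $c$ close to $1$), while the total injected load $L(D)$ is close to a multiple of $3x^*$.

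Next, I would bound the optimum. Once the total load is fixed, by Corollary~\ref{cor:balance} OPT is lower bounded by the (possibly infeasible) even distribution over any chosen number of PMs. Choosing the number of PMs used by OPT to be $4$ (this is where $m$ needs to be at least $4$, and only here does $(\cdot,m)$ differ from the $(\cdot,2)$ case treated in the following theorem) yields
\[
P(\pi^*) \;\geq\; 4\,b + 4\,\bigl(L(D)/4\bigr)^\alpha \;\approx\; 2^{\alpha+2}(x^*)^\alpha + 4b,
\]
when $L(D)\approx 8x^*$. Substituting $b=(x^*)^\alpha(\alpha-1)$ and taking $\alpha$ large, the $4b$ term is small compared with $2^{\alpha+2}(x^*)^\alpha$; more precisely, for any $\epsilon>0$ the ratio $\mathrm{ALG}/\mathrm{OPT}$ can be made at least $3^\alpha/(2^{\alpha+2}+\epsilon)$ by choosing $\varepsilon$ sufficiently small.

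I expect the main obstacle to be the careful bookkeeping in the two-case split: depending on whether the algorithm concentrates the injected load on a single PM (giving the ``$3^\alpha$'' numerator via a concentrated PM of load $3x^*$) or spreads it across several PMs, the adversary must be able to stop at a moment where the ratio against the four-PM OPT is at least the claimed bound. Making this dichotomy tight, and checking that the lower-order $b$ contributions in both numerator and denominator can be absorbed into the $\epsilon$ slack in the denominator uniformly over the algorithm's behavior, is the delicate step. The rest of the proof is a direct application of Observation~\ref{opt-load} and Corollary~\ref{cor:balance} to compare the two cost expressions.
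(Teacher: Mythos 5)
Your construction has a fatal flaw on the algorithm side: with divisible load (VMs of size $\varepsilon x^*$), no adversary can force a good online algorithm to ever carry a PM at load $3x^*$ while the optimum sits at $2x^*$ per PM. An algorithm that simply keeps its open PMs balanced never loads any PM above the running average, so the only leverage the adversary retains is the \emph{timing} of opening new PMs; that leverage yields only constant-factor bounds of the type in Theorem~\ref{thm:online-lower-vma-cvma-bigc} (which tends to $3/2$), and indeed the paper's own Theorem~\ref{theo:UBunboundedPMs} shows an algorithm whose ratio on all-small-VM instances is bounded by roughly $2$, strictly below $3^\alpha/(2^{\alpha+2}+\epsilon)$ once $\alpha$ is large. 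So the exponentially growing bound you are after provably cannot come from infinitesimal VMs. The paper's proof instead uses \emph{large indivisible} VMs: a first batch of $m$ VMs of load $\beta x^*$ and (if the algorithm spreads the first batch over more than $3m/4$ PMs) a second batch of $m/2$ VMs of load $2\beta x^*$. A PM that receives one VM from each batch is stuck at $3\beta x^*$, while \opt, forced by the bound $m$ to co-locate (there are $3m/2$ VMs but only $m$ PMs), balances to $2\beta x^*$ per PM; indivisibility is exactly what creates the $3^\alpha$ versus $2^{\alpha+2}$ gap, and the parameter $\beta$ is chosen large so that the additive $b$ terms are absorbed into the $\epsilon$ for every $\alpha>1$, not only for $\alpha$ large.

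Your treatment of $m$ is also off. Lower-bounding $P(\pi^*)$ by $4b+4(\L/4)^\alpha$ is only legitimate if the optimum actually uses four PMs; for $m>4$ and total load $8x^*$ the optimum would spread over about eight PMs at load $x^*$ each (Observation~\ref{opt-load}) and pay far less, so your denominator is not a valid bound on \opt. In the paper the finiteness of $m$ plays the opposite and essential role: it is a \emph{constraint on \opt} that forces co-location of large VMs, and the instance size scales with $m$ rather than being a fixed load of $8x^*$. The dichotomy you flag as the delicate step is precisely where the approach collapses: there is no stopping time at which a balancing algorithm is caught with a concentrated $3x^*$ PM.
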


\begin{proof}
We prove the result by giving an adversarial arrival of VMs.
We evaluate the competitive ratio of any online algorithm \alg with respect to an algorithm \opt that distributes the VMs among all the PMs ``as evenly as possible''.
We define a value $\beta > 1$ such that $\epsilon \geq (\alpha - 1)/\beta^\alpha$ for some value $\epsilon > 0$. Note that such value $\beta$ can be defined for any $\epsilon > 0$.
The adversarial arrival follows.
In a first phase, $m$ VMs arrive, each with load $\beta x^*$.

Let $\pi$ be the partition given by \alg.
We show first that if $\pi$ uses less than $3m/4$ PMs\footnote{For clarity we omit floors and ceilings in the proof.}
or some PM is assigned more than 2 VMs there exists another partition that can be obtained from $\pi$, it uses exactly $3m/4$ PMs, no PM is assigned more than 2 VMs, and the power consumption is not worse.

If $\pi$ uses less than $3m/4$ PMs,
then there exists another partition $\pi'$ that uses exactly $3m/4$ PMs with a power consumption that is not worse than $P(\pi)$.
To see why, notice that there are PMs in $\pi$ that are assigned more than one VM and that each load is $\beta x^*>x^*$. Then, applying repeatedly 
Lemma~\ref{lem:xsmallerC}
until $3m/4$ PMs are used, where $\l_1$ and $\l_2$ are the loads of any pair of VMs assigned to the same PM, a partition $\pi'$ such that $P(\pi') \leq P(\pi)$ can be obtained.

If in $\pi'$ some PM is assigned more than 2 VMs,
then there exists another partition $\pi''$ where no PM is assigned more than 2 VMs with a power consumption that is not worse than $P(\pi')$. To see why, consider the following reassignment procedure.
Repeatedly until there is no such PM, locate a PM $s_i$ with at least 3 VMs. Then,
locate a PM $s_j$ with one single VM (which exists by the pigeonhole principle). Then, move one VM from $s_i$ to $s_j$.
From Lemma~\ref{l-balance} each movement decreases the power consumed.
Hence, $\pi''$ is still a partition that uses $3m/4$ PMs, each PM has at most 2 VMs assigned,
and $P(\pi'') \leq P(\pi')$.

Then, we know that $P(\pi)$ is not smaller than the power consumption of a partition where exactly $3m/4$ PMs are used and no PM is assigned more than $2$ VMs.
On the other hand, OPT would have assigned each VM to a different PM. Thus, using that $x^* = (b/(\alpha - 1))^{1/\alpha}$, the competitive ratio is

\color{black}

%
%
\begin{eqnarray*}
 \rho &\geq& \frac{(2\beta x^*)^\alpha m/4 + (\beta x^*)^\alpha m/2 + 3mb/4 }{m(\beta x^*)^\alpha + mb}  \\
 &\geq& \frac{(2^{\alpha - 2} +1/2) \beta^\alpha}{\beta^\alpha + (\alpha - 1)} \geq 2^{\alpha - 3} + 1/4,
 \end{eqnarray*}
where the last inequality follows from $\beta^\alpha \geq (\alpha - 1)$.
Finally, observe that
$2^{\alpha - 3} +1/4 \geq 3^\alpha/(2^{\alpha+2} + \epsilon)$ for $\alpha>1$.
No more VMs arrive in this case.

Let us consider now the the case where \alg assigns the $m$ initial VMs to more than $3m/4$ PMs. Then, after \alg has assigned the first $m$ VMs,
a second batch of $m/2$ VMs arrive, each VM with load $2\beta x^*$.
Let $\pi$ be the partition output by \alg after this second batch is assigned. If in $\pi$ two of the 
second batch
VMs are assigned to
the same PM $s_i$, by the pigeonhole principle there is at least one PM $s_j$ with at most load $\beta x^*$. Then, from
Lemma~\ref{l-balance}, the power consumed is reduced if one of the new VMs is moved from $s_i$ to $s_j$. After repeating this process as many times as possible,
a partition $\pi'$ is obtained 
where each of the VMs of the second batch is assigned to a different PM,
and $P(\pi') \leq P(\pi)$. Since \alg used more than $3m/4$ PMs in the first batch, in $\pi'$, there are at least
$m/4$ PMs with load $3\beta x^*$. On 
the other hand, \opt can distribute all the VMs in such a way that each PM has a load of $2\beta x^*$.
Thus, the bound on the competitive ratio is as follows.
\begin{eqnarray*}
\rho &\geq& \frac{m(3\beta x^*)^\alpha/4}{m(2\beta x^*)^\alpha +  mb} \geq \frac{3^\alpha}{2^{\alpha +2}+ \epsilon},
\end{eqnarray*}
where the last inequality follows from $\epsilon \geq (\alpha - 1)/\beta^\alpha$. \qed
\end{proof}

Now, we show a stronger lower bound on the competitive ratio for $(\cdot,2)$-VMA problem. 

\begin{theorem}
\label{thm:lb2servers}
There exists an instance of problem $(\cdot,2)$-VMA such that no online algorithm can guarantee a competitive ratio smaller than $3^\alpha/2^{\alpha+1}$.
\end{theorem}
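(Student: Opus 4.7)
The plan is to adapt the two-phase adversarial arrival from the proof of Theorem~\ref{thm:lbmservers} to the two-\PM setting, where the restriction to only two \PMs makes a Phase~2 \VM unavoidable in forcing imbalance. Fix a parameter $\beta > 1$ large enough that $\beta^\alpha$ dominates $\alpha-1$, and recall $b=(x^*)^\alpha(\alpha-1)$ by definition of $x^*$. In Phase~1 the adversary injects two \VMs, each of load $\beta x^*$. Up to relabeling of \PMs, any online algorithm \alg is in one of two situations after Phase~1: (A) both \VMs on the same \PM, or (B) one \VM on each of the two \PMs.

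In Case~(A) the adversary halts. The online cost is $(2\beta x^*)^\alpha + b$, whereas \opt would separate the two \VMs onto different \PMs, incurring cost $2(\beta x^*)^\alpha + 2b$, which is optimal for large $\beta$ since $(2^\alpha-2)\beta^\alpha > \alpha-1$. Substituting $b=(x^*)^\alpha(\alpha-1)$ and letting $\beta\to\infty$ drives the ratio to $2^{\alpha-1}$, and since $4^\alpha \geq 3^\alpha$ this is at least $3^\alpha/2^{\alpha+1}$. In Case~(B) the adversary continues with Phase~2: one \VM of load $2\beta x^*$ arrives. Because both \PMs are already non-empty and no migrations are allowed, \alg must place this new \VM on one of them, creating (without loss of generality) a heavy \PM of load $3\beta x^*$ and a light \PM of load $\beta x^*$, for a total cost of $(3\beta x^*)^\alpha + (\beta x^*)^\alpha + 2b$. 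The offline \opt, by contrast, places the large \VM on one \PM and both small \VMs on the other, balancing both \PMs at load $2\beta x^*$ with cost $2(2\beta x^*)^\alpha + 2b$. Taking $\beta\to\infty$ the ratio tends to $(3^\alpha+1)/2^{\alpha+1}$, which strictly exceeds $3^\alpha/2^{\alpha+1}$.

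Taking the minimum over both cases yields the claimed lower bound on the competitive ratio of any online algorithm. The main subtlety is Case~(A), where one must verify both that \opt actually prefers spreading over consolidation at large $\beta$ (which follows from $2^\alpha > 2$) and that the resulting limit $2^{\alpha-1}$ indeed satisfies $2^{\alpha-1} \geq 3^\alpha/2^{\alpha+1}$ (which follows from the elementary inequality $4^\alpha \geq 3^\alpha$). Case~(B) is the binding case and mirrors the Phase-2 analysis of Theorem~\ref{thm:lbmservers}, but is simpler here because the bound on the number of \PMs leaves \alg no room to open a fresh machine.
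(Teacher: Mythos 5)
Your proposal is correct and is essentially the paper's own construction: two equal-load VMs in phase one, followed by a VM of twice that load if the algorithm splits them, with the same two-case analysis; the paper simply fixes the load at $6x^*$ (so all inequalities are explicit and finite) where you parametrize by $\beta x^*$ and argue in the limit. Since both of your limiting ratios strictly exceed $3^\alpha/2^{\alpha+1}$, a single sufficiently large finite $\beta$ (e.g., the paper's $\beta=6$) works for both cases, so the asymptotic phrasing is a harmless stylistic difference rather than a gap.
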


\begin{proof}
We prove the result by showing an adversarial arrival of VM.
We evaluate the competitive ratio of any online algorithm \alg with respect to an optimal algorithm \opt that knows the future VM arrivals.
The adversarial arrival follows. In a first phase two VM $d_1$ and $d_2$ arrive, with loads $\l(d_1)=\l(d_2)=6x^*$ (Recall from Section \ref{sec:prelim} that $x^* = \left(b/(\alpha-1)\right)^{1/\alpha}$).

If \alg assigns both VMs to the same PM, the power consumed will be $(12x^*)^\alpha + b$, whereas \opt would assign them to different PMs, with a power consumption of $2((6x^*)^\alpha+b)$. Hence, the ratio $\rho$ would be
\begin{eqnarray*}
\rho &=& \frac{(12x^*)^\alpha + b}{2((6x^*)^\alpha+b)} > \frac{12^\alpha}{2(6^\alpha+\alpha-1)} \\
&>& \frac{12^\alpha}{2(6^\alpha+2^\alpha)} = \frac{6^\alpha}{2(3^\alpha+1)},
\end{eqnarray*}
where the first inequality follows from $\alpha>1$ and the second from
$\alpha-1< 2^\alpha$ for any $\alpha>1$.
It is enough to prove that $6^\alpha/(2(3^\alpha+1)) \geq \left(3/2\right)^{\alpha}/2$, or equivalently
$4^\alpha \geq 3^\alpha+1$,
which is true for any $\alpha>1$.
Then, there are no new VM arrivals.

If, otherwise, \alg assigns each VM $d_1$ and $d_2$ to a different PM, then a third VM $d_3$ arrives, with load $\l(d_3)=12x^*$.
Then, \alg must assign it to one of the PMs. Independently of which PM is used, the power consumption of the final configuration
is $(18x^*)^\alpha + (6x^*)^\alpha + 2b$. On its side, \opt assigns $d_1$ and $d_2$ to one PM, and $d_3$ to the other, with a power
consumption of $2((12x^*)^\alpha+b)$. Hence, the competitive ratio $\rho$ is
\begin{eqnarray*}
\rho &=& \frac{(18x^*)^\alpha + (6x^*)^\alpha + 2b}{2((12x^*)^\alpha+b)} > \frac{18^\alpha + 6^\alpha}{2(12^\alpha+\alpha-1)} \\
&>& \frac{18^\alpha + 6^\alpha}{2(12^\alpha+4^\alpha)} \geq \left(3/2\right)^{\alpha}/2,
\end{eqnarray*}
where the first inequality follows from $\alpha>1$, the second from
$\alpha-1< 4^\alpha$ for any $\alpha>1$, and the third from
$(9^\alpha + 3^\alpha)/(6^\alpha+2^\alpha) \geq \left(3/2\right)^{\alpha}$,
what can be checked to be true. Then, there are no new VM arrivals and the claim follows. \qed
\end{proof}

	\subsection{Upper Bounds}
	\label{sec:online-upper}

Now, we study upper bounds for  \VMA, \CVMA, and $(\cdot,2)$-VMA problems. We start giving an online VMA algorithm that can be used in \VMA and \CVMA problems. 
%
The algorithm uses the load of the new revealed VM in order to decide the PM
where it will be assigned.
If the load of the revealed VM is strictly larger than $\min\{x^*, C\}/2$, the algorithm assigns this VM to a new PM without any other VM already assigned to it. 
Otherwise, the algorithm schedules the revealed VM to any loaded PM whose current load is smaller or equal than $\frac{\min\{x^*, C\}}{2}$. Hence, when this new VM is assigned, the load of this PM remains smaller than $\min\{x^*, C\}$.
If there is no such loaded PM, the revealed VM is assigned to a new PM.
Note that, since the case under consideration assumes the existence of an unbounded number of PMs, there exists always one new PM. 
A detailed description of this algorithm is shown in Algorithm \ref{alg:inftym}. As before, $A_j$ denotes the set of VMs assigned to PM $s_j$ at a given time.
\begin{algorithm}

\For{ each VM $d_i$}{
\eIf{$\l(d_i) >  \frac{\min\{x^*, C\}}{2}$}
{$d_i$ is assigned to a new PM\:}
{$d_i$ is assigned to any loaded PM $s_j$ where $\ell(A_j) \leq \frac{\min\{x^*, C\}}{2}$. 
If such loaded PM does not exist, $d_i$ is assigned to a new PM\:}

}
\caption{Online algorithm for \VMA and \CVMA problems.}
\label{alg:inftym}
\end{algorithm}

We prove the approximation ratio of Algorithm \ref{alg:inftym} in the following two theorems.
\begin{theorem}
\label{theo:UBunboundedPMs}
There exists an online algorithm for \VMA and \CVMA when $x^* < C$ that achieves the following competitive ratio: 
\begin{eqnarray*}
 \rho &= 1, \textrm{ if no VM $d_i$ has load such that $\l(d_i) < x^*$,}\\
 \rho &\leq  \left(1-\frac{1}{\alpha}\left(1-\frac{1}{2^\alpha}\right)\right)\left(2 + \frac{ x^*}{\ell(D_s)}\right), \textrm{ otherwise.}
\end{eqnarray*}
\end{theorem}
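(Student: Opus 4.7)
The plan is to exploit the structure that Algorithm~\ref{alg:inftym} imposes on the assignment. Since $x^*<C$, its threshold reduces to $x^*/2$. I would classify the VMs as \emph{large} (load $>x^*/2$, each assigned to its own ``type 1'' PM) and \emph{small} (load $\le x^*/2$, packed into ``type 2'' PMs), and further split the large VMs into \emph{big} ($\ell(d)\ge x^*$) and \emph{medium} ($x^*/2<\ell(d)<x^*$), so that $D_s$ is the union of medium and small VMs.

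If $D_s=\emptyset$, every VM has load at least $x^*$ and the algorithm places each one alone. I would prove this is optimal by establishing that for any $\ell_1,\ell_2\ge x^*$,
\[
(\ell_1+\ell_2)^\alpha+b \;\ge\; \ell_1^\alpha+\ell_2^\alpha+2b,
\]
which, after substituting $b=(\alpha-1)(x^*)^\alpha$, follows from convexity of $x\mapsto x^\alpha$ together with $\ell_1,\ell_2\ge x^*$. Hence no merging improves the cost, $\text{ALG}=\text{OPT}$, and $\rho=1$.

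For $D_s\ne\emptyset$ I would use three ingredients. First, a structural observation: there is at most one ``incomplete'' type 2 PM with load $\le x^*/2$, while every other type 2 PM carries load in $(x^*/2,x^*]$; this follows directly from the greedy rule. Second, the same convexity inequality forces big VMs to be isolated in any optimum (combining two of them is worse, and a marginal-cost check shows that adding small load on top of a big VM costs strictly more than $\varphi^*$ per unit, hence is never beneficial). Consequently both ALG and OPT place each big VM alone, and their big-VM contributions coincide. Third, the key per-PM estimate: for every $L\in[x^*/2,x^*]$,
\[
L^\alpha+b \;\le\; 2c\,\varphi^*\,L, \qquad c:=1-\frac{1}{\alpha}\left(1-\frac{1}{2^\alpha}\right).
\]
I would verify this by observing that $h(L):=(L^\alpha+b)/L$ is convex with interior minimum at $L=x^*$ and maximum on the interval at $L=x^*/2$; a direct evaluation gives $h(x^*/2)=2c\,\varphi^*$, which is where the constant $c$ originates.

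Summing this estimate over all medium-VM PMs and all complete type 2 PMs, and absorbing the single incomplete type 2 PM as an additive term of at most $(x^*/2)^\alpha+b=c\,\varphi^*\,x^*$, gives the ALG cost on medium and small VMs as at most $2c\,\varphi^*\,\ell(D_s)+c\,\varphi^*\,x^*$, while Observation~\ref{opt-load} yields at least $\varphi^*\,\ell(D_s)$ for the corresponding OPT cost. Because the big-VM contributions coincide, $\rho$ is bounded by the ratio of these two quantities, which simplifies to $c(2+x^*/\ell(D_s))$. The main obstacle is the tight pointwise per-PM estimate and the precise identification of the constant $c$; isolating big VMs cleanly so their contribution drops out of the ratio is the other subtle point, while the rest is bookkeeping.
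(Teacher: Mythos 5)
Your overall architecture is essentially the paper's: the same split of $D$ into big VMs ($\ell(d)\ge x^*$, each alone under the algorithm) and $D_s$, the same pointwise estimate $f(L)/L\le f(x^*/2)/(x^*/2)=2c\,\varphi^*$ for $L\in[x^*/2,x^*]$ with $c=1-\frac{1}{\alpha}\left(1-\frac{1}{2^\alpha}\right)$, the same observation that at most one packed PM is left with load below $x^*/2$, and the same absorption of that PM as an additive $f(x^*/2)=c\,\varphi^*x^*$. Your treatment of the $D_s=\emptyset$ case via superadditivity of $f$ above $x^*$ is a slightly more explicit variant of what the paper does.

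The one step that fails is the claim that \emph{any optimum isolates the big VMs}, which you use both to cancel the big-VM contributions and to obtain the lower bound $\varphi^*\ell(D_s)$ on ``OPT's cost on $D_s$.'' This is false: take one VM of load $x^*$ and one of load $\epsilon$; co-locating them costs $(x^*+\epsilon)^\alpha+b\approx f(x^*)+\varphi^*\epsilon$, while separating them costs $f(x^*)+\epsilon^\alpha+b\approx f(x^*)+b$, so the optimum merges the small VM onto the big one. Your marginal-cost check only shows that the per-unit cost of small load on a big PM exceeds $\varphi^*$; it does not show that placing that load elsewhere is cheaper, since every placement also costs at least $\varphi^*$ per unit and possibly an extra $b$. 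The paper sidesteps this entirely: it assumes nothing about the structure of the optimum, but writes $f(\ell(A_i))=\frac{f(\ell(A_i))}{\ell(A_i)}\sum_{j}\ell(d_{i_j})$ and charges each big VM $d\in A_i$ at rate $\frac{f(\ell(A_i))}{\ell(A_i)}\ge\frac{f(\ell(d))}{\ell(d)}$ (monotonicity of $f(x)/x$ above $x^*$) and each small VM at rate at least $\varphi^*$, yielding $P(\pi^*)\ge\sum_{d\in B}f(\ell(d))+\varphi^*\ell(D_s)$ for \emph{every} optimum. With that lower bound in place your final computation goes through unchanged (the common term $\sum_{d\in B}f(\ell(d))$ is dropped by the mediant inequality $\frac{a+X}{a+Y}\le\frac{X}{Y}$, valid here because $2c\ge1$ makes $X\ge Y$), so the gap is local and repairable, but as written the structural claim about OPT is wrong.
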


\begin{proof}
We proceed with the analysis of the competitive ratio of Algorithm~\ref{alg:inftym} shown above.
Let us first consider an optimal algorithm, that is, an algorithm that gives an optimal solution for any instance.
Let us denote by $\pi^*$ the optimal solution obtained by the optimal algorithm, and $A_i$ the load assigned to PM $s_i$ in that solution, for a particular instance of VMA problem.
Furthermore, load $A_i$ is decomposed in $d_{i_{1}}, d_{i_{2}}, \ldots, d_{i_{k_i}}$, where each $d_{i_j}$ is a VM that $\pi^*$ assigns to $s_i$.
Using simple algebra, it holds:
$$
f(\l(A_i)) = \frac{f(\l(A_i))}{\l(A_i)}(\l(d_{i_{1}}) + \l(d_{i_{2}}) + \cdots + \l(d_{i_{k_i}})).
$$
It is possible now to split the set $A_i$ in two sets, one with those VMs assigned to $s_i$ whose load is strictly smaller than $x^*$ and a second set that contains those VMs assigned to $s_i$ whose load is bigger than $x^*$.
In terms of notation, we say that $A_i$ is split in $B_i$ and $S_i$ (where $B$ stands for Big loads and $S$ stands for Small loads).
Therefore, it also holds:
$$
f(\l(A_i)) 
=\sum_{d_{i_j} \in B_i} \frac{f(\l(A_i))}{\l(A_i)}\l(d_{i_j}) + \sum_{d_{i_j} \in S_i} \frac{f(\l(A_i))}{\l(A_i)}\l(d_{i_j}).
$$

On the other hand, by definition of $x^*$, it holds that: $$f(\l(A_i))/ \l(A_i) \geq f(x^*)/ x^*$$ for all $i$ (indeed, for any load).
Moreover, if a PM has been assigned with a load $\l(d_{i_j})$ bigger than $x^*$, it also holds that $f(\l(A_i))/ \l(A_i) \geq f(\l(d_{i_j}))/\l(d_{i_j})$.
Hence, we obtain the following inequality:
$$
f(\l(A_i)) \geq \sum_{d_{i_j} \in B_i}f( \l(d_{i_j})) + \sum_{d_{i_j} \in S_i} \frac{f(x^*)}{x^*}\l(d_{i_j}).
$$

In order to lower bound the power consumption of the solution $\pi^*$, we plug the above inequality into the corresponding equation:
\begin{eqnarray*}
P(\pi^*) &=& \sum_{A_i\neq \emptyset} f(\l(A_i)) \\
&\geq&  \sum_{A_i\neq \emptyset}\sum_{d_{i_j} \in B_i}f(\l(d_{i_j}))+ \frac{f(x^*)}{x^*} \sum_{A_i\neq \emptyset}\sum_{d_{i_j} \in S_i}\l(d_{i_j}),
\end{eqnarray*}
or, equivalently expressed in more compact notation:
$$
P(\pi^*) \geq \sum_{d_i:\l(d_{i})\geq x^*}f(\l(d_{i})) + \frac{f(x^*)}{x^*} \sum_{d_i:\l(d_{i})< x^*}\l(d_{i}).
$$

Consider now Algorithm  \ref{alg:inftym}. Let us denote by $\pi$ a solution that Algorithm  \ref{alg:inftym} gives for a particular instance.
Also, let us denote by $\hat{A}_i$ the load assigned by Algorithm \ref{alg:inftym} to PM $s_i$.
 Note that due to the design of the algorithm,
after the last VM has been assigned, either there is only one loaded PM whose current load is smaller than $x^*/2$,
or every loaded PM has a load at least $x^*/2$. We study these two cases separately.  \\
\emph{Case 1:} $\l(\hat{A}_i) \geq x^*/2$ for all $i$.
In this case, in a solution provided by $\pi$ there are PMs with two types of load:
those that are loaded with one VM whose load is no smaller than $x^*$,
and those that are loaded with VMs whose load is strictly smaller than $x^*$, nonetheless, their total load is bigger than $x^*/2$.
Note that due to the design of the algorithm, none of the PMs in the second group has a load bigger than $x^*$.
Let us denote by $B$ the set of VMs with load at least $x^*$,
and $D_s$ the set of VMs with load less than $x^*$.
Therefore, it holds:
\begin{eqnarray*}
P(\pi) &=& \sum_{d \in B}f(\l(d)) + \sum_{\frac{x^*}{2} \leq \l(\hat{A}_i) \leq x^*} f(\l(\hat{A}_i)) \\
&\leq& \sum_{d \in B}f(\l(d)) + \frac{f(\frac{x^*}{2})}{\frac{x^*}{2}} \ell(D_s).
\end{eqnarray*}
{
Computing the ratio $\rho$ between $P(\pi)$ and $P(\pi^*)$, we obtain the following inequality:
\begin{eqnarray*}
\rho &\leq& \frac{\sum_{d \in B}f(\l(d)) + \frac{f(\frac{x^*}{2})}{\frac{x^*}{2}} \ell(D_s)}{\sum_{d \in B}f(\l(d)) + \frac{f(x^*)}{x^*} \ell(D_s)} \label{e-eq1} \leq \frac{\frac{f(\frac{x^*}{2})}{\frac{x^*}{2}} \ell(D_s)}{\frac{f(x^*)}{x^*} \ell(D_s)} \\ 
&=& 2\frac{f(\frac{x^*}{2})}{f(x^*)} = 2\left(1-\frac{1}{\alpha}\left(1-\frac{1}{2^\alpha}\right)\right).
\end{eqnarray*}
}
\emph{Case 2:} there exists $s_i$ such that $\l(\hat{A}_i) < x^*/2$.
In this case, $\pi$ gives solutions with three types of loaded PMs:
those that are loaded with one VM whose load is bigger than $x^*$,
those that are loaded with VMs whose load is strictly smaller than $x^*$, but which total load is at least $x^*/2$,
and one PM whose total load is is strictly smaller than $x^*/2$.
Let us denote such a PM by $s'$.
Therefore, it holds:
\begin{align*}
& P(\pi) =  \sum_{d \in B}f(\l(d)) + \sum_{\frac{x^*}{2} \leq \l(\hat{A}_i) \leq x^*} f(\l(\hat{A}_i)) +  f(\l(\hat{A}_{s'}))\\
& \leq \sum_{d \in B}f(\l(d)) + \frac{f(\frac{x^*}{2})}{\frac{x^*}{2}} \Big(\ell(D_s) - \l(\hat{A}_{s'})\Big)+ f(\l(\hat{A}_{s'}))\\
& =\sum_{d \in B}f(\l(d)) + \frac{f(\frac{x^*}{2})}{\frac{x^*}{2}} \Big(\ell(D_s) - \l(\hat{A}_{s'})\Big) + \l(\hat{A}_{s'})^\alpha + b.
\end{align*}
Let us denote the latter expression by $\Pi(\pi)$.
Computing the ratio $\rho$ between $P(\pi)$ and $P(\pi^*)$, we obtain the following inequality:
\begin{eqnarray*}
 \rho
&\leq& \frac{\Pi(\pi)}{\sum_{d \in B}f(\l(d)) + \frac{f(x^*)}{x^*} \ell(D_s)} \label{e-eq2} \\
&\leq& 2\left(1-\frac{1}{\alpha}\left(1-\frac{1}{2^\alpha}\right)\right) + \frac{ \l(\hat{A}_{s'})^\alpha - \l(\hat{A}_{s'})\frac{f(\frac{x^*}{2})}{\frac{x^*}{2}}+ b}{ \frac{f(x^*)}{x^*} \ell(D_s)} \\
&\leq& 2\left(1-\frac{1}{\alpha}\left(1-\frac{1}{2^\alpha}\right)\right) + \frac{ \l(\hat{A}_{s'})^\alpha + b}{ \frac{f(x^*)}{x^*} \ell(D_s)} \nonumber \\
&\leq& 2\left(1-\frac{1}{\alpha}\left(1-\frac{1}{2^\alpha}\right)\right) + \frac{ (\frac{x^*}{2})^\alpha + b}{ \frac{f(x^*)}{x^*} \ell(D_s)}\\
& = & \left(1-\frac{1}{\alpha}\left(1-\frac{1}{2^\alpha}\right)\right)\left(2 + \frac{ x^*}{\ell(D_s)}\right). \nonumber
\end{eqnarray*}
Since $ x^* / \ell(D_s)$ is always positive, the competitive ratio of Algorithm \ref{alg:inftym} is equal to $2^{\alpha - 1} +  x^* / \ell(D_s)$.
Observe that, when no VM
$d$ has load $\l(d)<x^*$, i,e., $S=\emptyset$,
$P(\pi)$ and $P(\pi^*)$ are equal. Hence, the competitive ratio is $1$.
\qed
\end{proof}

\begin{theorem}
There exists an online algorithm for  \CVMA when $x^* \geq C$ that achieves competitive ratio $\rho \leq \frac{2b}{C}\left( 1 + \frac{1}{(\alpha-1)2^{\alpha}} \right)\left(2 + \frac{C}{\ell(D)}\right)$.
\end{theorem}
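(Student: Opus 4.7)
The plan is to apply Algorithm~\ref{alg:inftym} in the regime $x^*\geq C$. Under this hypothesis $\min\{x^*,C\}=C$, so the algorithm's threshold is $C/2$: every VM with load above $C/2$ starts a new PM, while smaller VMs pack into an already-active PM whose current load is at most $C/2$ (a new PM is opened only when none qualifies).

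First I would establish a structural property of the output $\pi$. Once a PM's load exceeds $C/2$ the algorithm never inserts another small VM into it; hence at termination every non-empty PM has load in $(C/2,C]$, except for at most one ``partial'' PM $s'$ with load $\ell_p\in[0,C/2]$.

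Second I would upper bound $P(\pi)$. By hypothesis every PM has load in $(0,x^*]$, and on this interval $f(x)/x$ is strictly decreasing (direct calculus, as underlying Observation~\ref{opt-load}). Therefore, for every non-partial PM with load $\ell>C/2$ we have $f(\ell)\leq\ell\cdot f(C/2)/(C/2)$, and for the partial PM $f(\ell_p)\leq f(C/2)$. Summing,
\begin{equation*}
P(\pi)\;\leq\;\frac{f(C/2)}{C/2}\bigl(\ell(D)-\ell_p\bigr)+f(\ell_p)\;\leq\;\frac{f(C/2)}{C/2}\Bigl(\ell(D)+\frac{C}{2}\Bigr).
\end{equation*}

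Third I would lower bound $P(\pi^*)$. Since each PM has capacity $C$, any feasible partition uses at least $\lceil\ell(D)/C\rceil\geq \ell(D)/C$ PMs, each contributing at least the activation cost $b$; hence $P(\pi^*)\geq b\,\ell(D)/C$. This plays the role of the bound $P(\pi^*)\geq \varphi^*\ell(D)$ used in the proof of Theorem~\ref{theo:UBunboundedPMs}, but expressed in a form that surfaces the factor $b/C$ needed in the claim.

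Finally I would assemble the ratio. Using $b\geq C^{\alpha}(\alpha-1)$ (equivalent to $x^*\geq C$), one gets $(C/2)^{\alpha}\leq b/((\alpha-1)2^{\alpha})$, so
\begin{equation*}
\frac{f(C/2)}{C/2}\;=\;\frac{2\bigl((C/2)^{\alpha}+b\bigr)}{C}\;\leq\;\frac{2b}{C}\Bigl(1+\frac{1}{(\alpha-1)2^{\alpha}}\Bigr).
\end{equation*}
Substituting this into the upper bound on $P(\pi)$, dividing by $P(\pi^*)\geq b\ell(D)/C$, and rewriting $(\ell(D)+C/2)/\ell(D)=(2+C/\ell(D))/2$, produces the claimed expression for $\rho$. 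The main obstacle is purely the algebraic bookkeeping required to cast the ratio in the precise form of the stated bound; the structural argument mirrors the proof of Theorem~\ref{theo:UBunboundedPMs} with $C$ playing the role played there by $x^*$, and the prefactor $2b/C\cdot(1+1/((\alpha-1)2^{\alpha}))$ appears as the simplification of $f(C/2)/(C/2)$ allowed by the hypothesis on $b$.
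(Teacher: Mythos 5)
Your structural analysis of Algorithm~\ref{alg:inftym} in the regime $x^*\geq C$ (at most one ``partial'' PM with load in $[0,C/2]$, all other active PMs with load in $(C/2,C]$) and your resulting upper bound $P(\pi)\leq\frac{f(C/2)}{C/2}\bigl(\ell(D)+\frac{C}{2}\bigr)$ are correct and mirror the paper's argument, which handles the same estimate by splitting into two cases according to whether the partial PM exists. The one genuine deviation is your lower bound on the optimum: you charge only the activation costs, $P(\pi^*)\geq b\,\ell(D)/C$, whereas the paper uses the monotonicity of the power rate $f(x)/x$ on $(0,C]$ to get the stronger bound $P(\pi^*)\geq\frac{f(C)}{C}\ell(D)=\frac{C^\alpha+b}{C}\ell(D)$.

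The gap is in the final assembly: the algebra does not ``produce the claimed expression.'' Writing $K=\frac{2b}{C}\bigl(1+\frac{1}{(\alpha-1)2^{\alpha}}\bigr)$, your chain of inequalities gives
\begin{equation*}
\rho\;\leq\;\frac{K\bigl(\ell(D)+\tfrac{C}{2}\bigr)}{\tfrac{b}{C}\,\ell(D)}\;=\;K\cdot\frac{C}{2b}\left(2+\frac{C}{\ell(D)}\right)\;=\;\left(1+\frac{1}{(\alpha-1)2^{\alpha}}\right)\left(2+\frac{C}{\ell(D)}\right),
\end{equation*}
because the factor $\frac{2b}{C}$ you introduce when bounding $\frac{f(C/2)}{C/2}$ is exactly cancelled by the $\frac{C}{b}$ coming from your denominator. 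So what you actually prove is the stated bound multiplied by $\frac{C}{2b}$; this implies the theorem only when $2b\geq C$, which is not a consequence of $x^*\geq C$ alone (the condition $b\geq C^{\alpha}(\alpha-1)$ permits $b<C/2$ when $C$ is small and $\alpha$ is close to $1$). To land on the paper's expression you should keep the denominator $\frac{f(C)}{C}\ell(D)$ and bound $2f(C/2)/f(C)\leq\frac{2b}{C}\bigl(1+\frac{1}{(\alpha-1)2^{\alpha}}\bigr)$ as the paper does; alternatively, state the bound your argument genuinely supports, which differs from the theorem's by the factor $\frac{2b}{C}$ and therefore cannot be asserted to coincide with it.
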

\begin{proof}
We proceed with the analysis of the competitive ratio of Algorithm~\ref{alg:inftym} in the case when $x^* \geq C$.
The analysis uses the same technique used in the proof for the previous theorem. Hence, we just show the difference.

On the one hand, when $x^* \geq C$, it holds that $f(\l(A_i))/ \l(A_i) \geq f(C)/ C$ due to the fact that $f(x)/x$ is monotone decreasing in interval $(0, C]$. It is also obvious that all the PMs will be loaded no more $C$. As a result, the optimal power consumption for \CVMA can be bounded by
$$
P(\pi^*) \geq \frac{f(C)}{C} \ell(D).
$$

On the other hand, the solution given by Algorithm~\ref{alg:inftym} can also be upper bounded. 
We consider the following two cases. \\
\emph{Case 1:} $\l(\hat{A}_i) \geq C/2$ for all $i$.
In this case, every PM will be loaded between $C/2$ and $C$. Consequently,
\begin{eqnarray*}
P(\pi) = \sum_{\frac{C}{2} \leq \l(\hat{A}_i) \leq C} f(\l(\hat{A}_i))
\leq \frac{f(\frac{C}{2})}{\frac{C}{2}} \ell(D).
\end{eqnarray*}
The competitive ratio $\rho$ then satisfies
\begin{eqnarray*}
\rho \leq \frac{\frac{f(\frac{C}{2})}{\frac{C}{2}} \ell(D)}{\frac{f(C)}{C} \ell(D)}
\: = \: 2\frac{f(\frac{C}{2})}{f(C)} \leq \frac{2b}{C}\left( 1 + \frac{1}{(\alpha-1)2^{\alpha}} \right).
\end{eqnarray*}
\emph{Case 2:}  there exists $s_i$ such that $\l(\hat{A}_i) < C/2$. In this case, it holds: 
\begin{eqnarray*}
P(\pi) &=& \sum_{\frac{C}{2} \leq \l(\hat{A}_i) \leq C} f(\l(\hat{A}_i)) +  f(\l(\hat{A}_{s'}))\\
&\leq& \frac{f(\frac{C}{2})}{\frac{C}{2}} \Big(\sum_{d_i:\l(d_{i}) \leq  C}\l(d_{i}) - \l(\hat{A}_{s'})\Big)+ f(\l(\hat{A}_{s'}))\\
&=& \frac{f(\frac{C}{2})}{\frac{C}{2}} \Big(\ell(D) - \l(\hat{A}_{s'})\Big) + \l(\hat{A}_{s'})^\alpha + b.
\end{eqnarray*}
The competitive ratio $\rho$ then satisfies
\begin{eqnarray*}
 \rho
&\leq& \frac{P(\pi)}{\frac{f(C)}{C} \ell(D)}
\leq \frac{2b}{C}\left( 1 + \frac{1}{(\alpha-1)2^{\alpha}} \right) + \\
&+& \frac{ \l(\hat{A}_{s'})^\alpha - \l(\hat{A}_{s'})\frac{f(\frac{C}{2})}{\frac{C}{2}}+ b}{ \frac{f(C)}{C} \ell(D)} \\
&\leq&  \frac{2b}{C}\left( 1 + \frac{1}{(\alpha-1)2^{\alpha}} \right)+ \frac{ \l(\hat{A}_{s'})^\alpha + b}{ \frac{f(C)}{C} \ell(D)} \nonumber \\
&\leq& \frac{2b}{C}\left( 1 + \frac{1}{(\alpha-1)2^{\alpha}} \right) + \frac{ (\frac{C}{2})^\alpha + b}{ \frac{f(C)}{C} \ell(D)}\\
& = & \frac{2b}{C}\left( 1 + \frac{1}{(\alpha-1)2^{\alpha}} \right)\left(2 + \frac{C}{\ell(D)}\right). \nonumber
\end{eqnarray*}
\qed
\end{proof}

\subsubsection{Upper Bounds for $(\cdot,2)$-VMA problem} We now present an algorithm (detailed in Algorithm~\ref{m2alg}) for $(\cdot,2)$-VMA problem and show an upper bound on its competitive ratio.
$A_1$ and $A_2$ are the sets of VMs
assigned to PMs $s_1$ and $s_2$, respectively, at any given time.

\begin{algorithm}
\For{ each VM $d_i$}{
\eIf{ $\l(d_i)+\l(A_1)\leq \left(b/(2^\alpha-2)\right)^{1/\alpha}$ {\bf or} $\l(A_1)\leq\l(A_2)$}
{$d_i$ is assigned to $s_1$\;}
{$d_i$ is assigned to $s_2$\;}
}
\caption{Online algorithm for $(\cdot,2)$-VMA.}
\label{m2alg}
\end{algorithm}

We prove the approximation ratio of Algorithm~\ref{m2alg} in the following theorem.

\begin{theorem}
\label{theo:UB2PMs}
There exists an online algorithm for $(\cdot,2)$-VMA that achieves the following competitive ratios.
\begin{eqnarray*}
\rho = 1, &\textrm{  for $\l(D)\leq \left(\frac{b}{2^\alpha-2}\right)^{1/\alpha}$},\\
\rho \leq \max\left\{2,\left(\frac{3}{2}\right)^{\alpha-1}\right\}, &\textrm{  for $\l(D)> \left(\frac{b}{2^\alpha-2}\right)^{1/\alpha}$.}
\end{eqnarray*}
\end{theorem}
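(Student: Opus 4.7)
The plan is to split on whether $\l(D)\le T := (b/(2^{\alpha}-2))^{1/\alpha}$. The threshold $T$ is chosen so that a single-PM assignment and a balanced two-PM assignment have the same cost at total load $2T$, i.e., $(2T)^{\alpha}+b=2T^{\alpha}+2b$; by convexity (Lemma~\ref{l-balance}) this makes a single PM optimal whenever the total load is at most $2T$, and in particular for $\l(D)\le T$. In that regime every arrival $d_i$ satisfies $\l(d_i)+\l(A_1)\le \l(D)\le T$, so condition~1 of Algorithm~\ref{m2alg} always fires, ALG produces the single-PM assignment, and the ratio is $1$.

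For $\l(D)>T$, ALG is forced to use both PMs (a pure $s_1$ solution would require condition~1 for every VM past the first, hence $\l(D)\le T$). Let $(L_1,L_2)$ be the final loads with $L_1\ge L_2>0$. I would prove the bound
\[
L_1^{\alpha}+L_2^{\alpha}+2b \;\le\; \max\!\Big\{2,\;\Big(\tfrac{3}{2}\Big)^{\alpha-1}\Big\}\cdot \mathrm{OPT}
\]
by splitting on whether $L_1\le T$ or $L_1>T$. If $L_1\le T$, then $\l(D)\le 2T$, ALG's cost is at most $T^{\alpha}+L_2^{\alpha}+2b$, and comparing with $\mathrm{OPT}\ge \min\{\l(D)^{\alpha}+b,\,2(\l(D)/2)^{\alpha}+2b\}$ using $T^{\alpha}=b/(2^{\alpha}-2)$ yields a ratio of at most $2$. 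If $L_1>T$, the algorithm's behavior past the first use of $s_2$ is essentially list scheduling via condition~2, and the key structural step is the imbalance bound $L_1\le \tfrac{3}{4}\l(D)$; from the elementary inequality $L_1^{\alpha}+L_2^{\alpha}\le \l(D)\,L_1^{\alpha-1}$ (using $L_2\le L_1$) combined with $\mathrm{OPT}\ge 2(\l(D)/2)^{\alpha}+2b$ we obtain a ratio of at most $2^{\alpha-1}(L_1/\l(D))^{\alpha-1}\le (3/2)^{\alpha-1}$.

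The main obstacle is the $3{:}1$ imbalance bound in the list-scheduling regime, because the hybrid nature of Algorithm~\ref{m2alg} (condition~1 fills $s_1$ close to $T$ before condition~2 activates) can create large raw imbalances on specific inputs, such as $d_1=\epsilon$, $d_2=T$, $d_3=3T$, which produce $L_1\approx 3T$ and $L_2=T$. In such degenerate cases, however, OPT is also forced into an unbalanced partition by VM indivisibility, making the ratio collapse to near $1$. I expect to overcome this via a two-step argument: first, show that at the moment $s_2$ is first used we have $\l(A_1)\le T$ and $\l(A_2)=0$; second, show inductively that every subsequent condition~2 placement goes to the strictly less loaded PM, so the final imbalance is controlled by the size of the most recent unbalancing VM, which is itself bounded by the then-current $\l(A_2)$. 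When even that bound fails, a refined comparison against $\mathrm{OPT}$'s realizable partition, rather than the idealized even split, closes the argument.
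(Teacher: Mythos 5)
Your overall architecture matches the paper's: ratio $1$ below the threshold $T=(b/(2^\alpha-2))^{1/\alpha}$, ratio at most $2$ whenever the optimum uses a single PM, and a bound of the form $2^{\alpha-1}(L_1/\ell(D))^{\alpha-1}$ against the balanced two-PM lower bound otherwise. The first two pieces are correct as stated. But there is a genuine gap exactly where you flag ``the main obstacle'': the imbalance bound $L_1\le\frac{3}{4}\ell(D)$ is simply false for Algorithm~\ref{m2alg} (take $d_1=\epsilon$, $d_2=T$, $d_3=100T$, which yields $L_1\approx 100T$ and $L_2=T$), and the recovery you sketch is not yet a proof. Step (i) of that recovery is itself false in general: if the very first VM has load greater than $T$, it is still placed on $s_1$ via the tie-breaking clause $\ell(A_1)\le\ell(A_2)$, so at the moment $s_2$ is first used one can have $\ell(A_1)\gg T$. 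And your closing sentence --- ``a refined comparison against OPT's realizable partition \dots closes the argument'' --- is precisely the content that must be supplied; it is the heart of the imbalanced case.

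The paper closes that case with a concrete structural claim (Claim~\ref{claim:max}): whenever $\ell(D)\ge 2T$, there exists a single VM $d_i$ with $\ell(d_i)\ge|\ell(A_2)-\ell(A_1)|$, proved by walking backwards through the arrival sequence. This is what makes ``OPT is also forced into an unbalanced partition'' quantitative: writing $\ell=\ell(A_2)-\ell(A_1)$ with $\ell(A_1)$ the smaller load and $\ell>2\ell(A_1)$, any two-PM partition must place a VM of load at least $\ell$ alone-ish on one PM, so the optimum is at least $(2\ell(A_1))^\alpha+\ell^\alpha+2b$, and the ratio $\bigl(\ell(A_1)^\alpha+(\ell(A_1)+\ell)^\alpha+2b\bigr)/\bigl((2\ell(A_1))^\alpha+\ell^\alpha+2b\bigr)$ is maximized at $\ell=2\ell(A_1)$, giving $(1+3^\alpha)/2^{\alpha+1}\le(3/2)^{\alpha-1}$. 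You need this claim (or an equivalent) together with that explicit computation; without them the case $L_1>\frac{3}{4}\ell(D)$ remains open. One smaller fix: in your $L_1>T$ case you compare only against $2(\ell(D)/2)^\alpha+2b$, which is not a lower bound on the optimum when it uses one PM (possible whenever $\ell(D)<2T$, even with $L_1>T$); you must retain the $\min$ with $\ell(D)^\alpha+b$ there as well, which still yields ratio at most $2$.
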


\begin{proof}
Consider Algorithm~\ref{m2alg} shown above.
If $\l(D)\leq \left(b/(2^\alpha-2)\right)^{1/\alpha}$, then the competitive ratio is $1$ as we show. Algorithm~\ref{m2alg} assigns all the VMs to PM $s_1$. On the other hand, the optimal offline algorithm also assigns all the VMs to one PM. To prove it, it is enough to show that
$\l(D)^\alpha+b < \l(A_1)^\alpha + \l(A_2)^\alpha + 2b$. Using that $\l(A_1)^\alpha + \l(A_2)^\alpha > 2\left(\l(D)/2\right)^\alpha$ and manipulating, it is enough to prove
$\l(D) < 2\left(b / (2^\alpha-2)\right)^{1/\alpha}$.
This is true for $\l(D)\leq \left(b/(2^\alpha-2)\right)^{1/\alpha}$.


We consider now the case $\left(b/(2^\alpha-2)\right)^{1/\alpha} < \l(D) < 2\left(b/(2^\alpha-2)\right)^{1/\alpha}$. Within this range, for the optimal algorithm is still better to assign all VMs to one PM, as shown. 
Then, the competitive ratio $\rho$ is
\begin{align}
\rho
= \frac{\l(A_1)^\alpha+\l(A_2)^\alpha+2b}{\l(D)^\alpha+b}
\leq \frac{\l(D)^\alpha+2b}{\l(D)^\alpha+b}
< 2.\label{ratio2}
\end{align}


Consider any given step after $\l(D) \geq 2\left(b/(2^\alpha-2)\right)^{1/\alpha}$.
Within this range, the optimal algorithm may assign the VMs to one or both PMs.
If the optimal algorithm assigns to one PM, Inequality~\ref{ratio2} applies.
Otherwise, the competitive ratio $\rho$ is
\begin{eqnarray*}
\rho&=& \frac{\l(A_1)^\alpha+\l(A_2)^\alpha+2b}{2(\l(D)/2)^\alpha+2b}
\leq 2^{\alpha-1} \frac{\l(A_1)^\alpha+\l(A_2)^\alpha}{\l(D)^\alpha} \\
&=& 2^{\alpha-1} \frac{\l(A_1)^\alpha/\l(A_2)^\alpha+1}{(\l(A_1)/\l(A_2)+1)^\alpha}.
\end{eqnarray*}
Then, in order to obtain a ratio at most $x^\alpha/2$, where $x$ will be set later, it is enough to guarantee
\begin{align*}
2^{\alpha-1} \frac{\l(A_1)^\alpha/\l(A_2)^\alpha+1}{(\l(A_1)/\l(A_2)+1)^\alpha}
&\leq  \frac{x^\alpha}{2}\\
\frac{(\l(A_1)/\l(A_2))^\alpha+1}{(\l(A_1)/\l(A_2)+1)^\alpha}
&\leq \left(\frac{x}{2}\right)^\alpha.
\end{align*}
Without loss of generality, assume $\l(A_1)\leq \l(A_2)$. This implies that $(\l(A_1)/\l(A_2))^\alpha \leq \l(A_1)/\l(A_2)$.
Then, it is enough to have
\begin{align*}
\frac{\l(A_1)/\l(A_2)+1}{(\l(A_1)/\l(A_2)+1)^\alpha}
&\leq \left(\frac{x}{2}\right)^\alpha.
\end{align*}
Let us now define $\l(A_1)+\l=\l(A_2)$ for some $\l\geq 0$. Manipulating and replacing, it is enough to show
\begin{align}
\frac{\l}{\l(A_1)}
&\leq
\frac{2-\left(2/x\right)^{\alpha/(\alpha-1)}}{\left(2/x\right)^{\alpha/(\alpha-1)} -1}. \label{eq:condratio}
\end{align}

If Inequality~\ref{eq:condratio} holds the theorem is proved. Otherwise, the following claim is needed.
\begin{claim}
\label{claim:max}
If $\l(D) \geq 2\left(b/(2^\alpha-2)\right)^{1/\alpha}$, then there must exist a VM $d_i$ in $D$ such that $\l(d_i)\geq|\l(A_2)-\l(A_1)|$.
\end{claim}
\begin{proof}
If $\l(A_2)=\l(A_1)$ the claim follows trivially. Assume that $\l(A_2)\neq\l(A_1)$.
Consider any given time when $\l(D) \geq 2\left(b/(2^\alpha-2)\right)^{1/\alpha}$.
For the sake of contradiction, assume that for all $d_i\in D$ it is $\l(d_i)<|\l(A_2)-\l(A_1)|$.
Let $d_1,d_2,\dots,d_r$ be the order in which the VMs were revealed to Algorithm~\ref{m2alg}.
And let the respective sets of VMs be called $D_i=\{d_j|j\in [1,i]\}$, that is $D_r=D$.
Given that $\l(D) \geq 2\left(b/(2^\alpha-2)\right)^{1/\alpha} > \left(b/(2^\alpha-2)\right)^{1/\alpha}$, the VM $d_r$ was assigned to the PM with smaller load. Then, either $\l(d_r)\geq |\l(A_2)-\l(A_1)|$ which would be a contradiction, or if $\l(d_r) < |\l(A_2)-\l(A_1)|$ the
PM with the smaller load before and after assigning $d_r$ is the same.
The argument can be repeated iteratively backwards for each $d_{r-1}$, $d_{r-2}$, etc. until, for some $j\in [1,r)$, either it is $\l(d_j)\geq |\l(A_2)-\l(A_1)|$ reaching a contradiction, or the total load is $\l(D_j)<\left(b/(2^\alpha-2)\right)^{1/\alpha}$.
If the latter is the case, we know that for $i\in [1,j]$ every $d_i$ was assigned to $s_1$.
Recall that for $i\in(j,r]$ each $d_i$ was assigned to the same PM. And, given that $d_{j+1}$ is the first VM for which the total load is at least $\left(b/(2^\alpha-2)\right)^{1/\alpha}$, that PM is $s_2$.
But then, we have $\l(A_2)< \l(A_1)< \left(b/(2^\alpha-2)\right)^{1/\alpha}$, which is a contradiction with the assumption that $\l(D) \geq 2\left(b/(2^\alpha-2)\right)^{1/\alpha}$.
\end{proof}

Using Claim~\ref{claim:max} we know that there exists a $d_i$ in the input such that
\begin{align*}
\l(d_i) \geq \l > \l(A_1)
\frac{2-\left(2/x\right)^{\alpha/(\alpha-1)}}{\left(2/x\right)^{\alpha/(\alpha-1)} -1}.
\end{align*}
From the latter, it can be seen that if $x\geq 2 (3/4)^{\frac{\alpha-1}{\alpha}}$, then we have that $\l>2\l(A_1)$.
Then, the competitive ratio $\rho$ is
\begin{eqnarray*}
\rho &=& \frac{\l(A_1)^\alpha+(\l(A_1)+\l)^\alpha+2b}{(2\l(A_1))^\alpha+\l^\alpha+2b} \\
&\leq& \frac{\l(A_1)^\alpha+(\l(A_1)+\l)^\alpha}{(2\l(A_1))^\alpha+\l^\alpha}.
\end{eqnarray*}
Using calculus, this ratio is maximized for $\l=2\l(A_1)$ for $\l\geq 2\l(A_1)$. Then, we have
$\rho
\leq (1+3^\alpha)/(2\cdot2^\alpha)$.
Then, in order to obtain a ratio at most $x^\alpha/2$, it is enough to guarantee
$(1+3^\alpha)/(2\cdot2^\alpha) \leq x^\alpha/2$
which yields
$x \geq \left((1+3^\alpha)/2^\alpha\right)^{1/\alpha}$.

Given that, for any $\alpha\geq1$, it holds:
$$2 (3/4)^{1-1/\alpha} \geq \left((1+3^\alpha)/2^\alpha\right)^{1/\alpha}.$$
Then, the competitive ratio is
$\rho
\leq (2 (3/4)^{1-1/\alpha})^\alpha/2
= (3/2)^{\alpha-1}$.
\qed
\end{proof}


\section{Discussion}
\label{sec:discussion}
We discuss in this section practical issues that must be addressed to apply our results to production environments.

{\bf Heterogeneity of Servers.} For the sake of 
simplicity, we assume in our model that all servers in a data center are identical. We believe this reasonable, considering that modern data centers are usually built with homogeneous commodity hardware. Nevertheless, the proposed model and derived results are also 
amenable to heterogeneous data center environments. 
In a heterogeneous data center, servers can be categorized into several groups with identical servers in each group. 
Then, different types of applications can be assigned to server groups according to their resource requirements.
The VMA model presented here can be applied to the assignment problem of allocating tasks from the designated types of applications (especially CPU-intensive ones) to each group of servers. The approximation results we derive in this paper can be then combined with server-group assignment approximation bounds (out of the scope of this paper) for energy-efficient task assignment in real data centers, regardless of the homogeneity of servers.

{\bf Consolidation.} Traditionally, consolidation has been understood as a bin packing problem \cite{MSvector, wangconsolidating}, 
where VMs are assigned to PMs attempting to minimize the number of active PMs.
However, the results we derived in this paper, as well as the 
results in~\cite{eenergy}, 
show that such approach is not energy-efficient.
Indeed, we showed that PM's should be loaded up to $x^*$ to reduce energy consumption, even 
if this requires having more active PMs.

{\bf VM arrival and departure.} When a new VM arrives to the system, or an assigned VM departs, 
adjustments to the assignment may improve energy efficiency.
Given that the cost of VM migration is nowadays decreasing dramatically,
our offline positive results can also be accommodated by reassigning VMs 
whenever the set of VM demands changes. 
Should the cost of migration be high to reassign after each VM arrival or departure, time could be divided in epochs buffering newly arrived VM demands until the beginning of the next epoch, when all (new and old) VMs would be reassigned (if necessary) running our offline approximation algorithm.

{\bf Multi-resource scheduling.} This work focuses on CPU-intensive jobs (VMs) such as MapReduce-like tasks \cite{mapreduce} which are representative in production datacenters. As the CPU is generally the dominant energy consumer in a server, assigning VMs according to CPU workloads 
entails energy efficiency.
However, there exist types of jobs demanding heavily other computational resources, such as memory and/or storage. Although these resources have limited impact on a server's energy consumption, VMs performance may be degraded if they become the bottleneck resource in the system. In this case, a joint optimization of multiple resources (out of the scope of this paper) is necessary for VMA. 

\section{Conclusions}
\label{sec:conclude}
In this paper, we have studied a particular case of the generalized assignment problem with applications to Cloud Computing. We have considered the problem of assigning virtual machines (VMs) to physical machines (PMs) so that the power consumption is minimized, a problem that we call virtual machine assignment (VMA). In our theoretical analysis, we have shown that the decision version of \CMVMA problem is strongly NP-complete. We have shown as well that the \CVMA, \MVMA and \VMA problems are strongly NP-hard. Hence, there is no FPTAS for these optimization problems. We have shown the existence of a PTAS that solves the \VMA and \MVMA offline problems. On the other hand, we have proved lower bounds on the approximation ratio of the \CVMA and \CMVMA problems. With respect to the online version of these problems, we have proved upper and lower bounds on the competitive ratio of the \VMA, \CVMA, \MVMA, and \CMVMA problems.

\newpage

\bibliographystyle{plain}
\bibliography{refs}

\begin{thebibliography}{10}

\bibitem{ec2}
Amazon web services.
\newblock \url{http://aws.amazon.com}.
\newblock Accessed August 27, 2012.

\bibitem{citrix}
Citrix.
\newblock \url{http://www.citrix.com}.
\newblock Accessed August 27, 2012.

\bibitem{rackspace}
Rackspace.
\newblock \url{http://www.rackspace.com}.
\newblock Accessed August 27, 2012.

\bibitem{DBLP:conf/soda/AlonAWY97}
Noga Alon, Yossi Azar, Gerhard~J. Woeginger, and Tal Yadid.
\newblock Approximation schemes for scheduling.
\newblock In Michael~E. Saks, editor, {\em SODA}, pages 493--500. ACM/SIAM,
  1997.

\bibitem{alon1998approximation}
Noga Alon, Yossi Azar, Gerhard~J Woeginger, and Tal Yadid.
\newblock Approximation schemes for scheduling on parallel machines.
\newblock {\em Journal of Scheduling}, 1(1):55--66, 1998.

\bibitem{DBLP:journals/ton/AndrewsAZZ12}
Matthew Andrews, Antonio~Fern{\'a}ndez Anta, Lisa Zhang, and Wenbo Zhao.
\newblock Routing for power minimization in the speed scaling model.
\newblock {\em IEEE/ACM Trans. Netw.}, 20(1):285--294, 2012.

\bibitem{DBLP:conf/focs/AndrewsAZ10}
Matthew Andrews, Spyridon Antonakopoulos, and Lisa Zhang.
\newblock Minimum-cost network design with (dis)economies of scale.
\newblock In {\em Proc. of 51-st Annual IEEE Symposium on Foundations of
  Computer Science}, pages 585--592, 2010.

\bibitem{AntoniadisVCR}
Antonio Antoniadis, Sungjin Im, Ravishankar Krishnaswamy, Benjamin Moseley,
  Viswanath Nagarajan, Krik Pruhs, and Cliff Stein.
\newblock Hallucination helps: Energy efficient circuit routing.
\newblock In Erlebach and Persiano \cite{DBLP:conf/waoa/2012}.

\bibitem{eenergy}
Jordi Arjona, Angelos Chatzipapas, Antonio~Fernandez Anta, and Vincenzo
  Mancuso.
\newblock A measurement-based analysis of the energy consumption of data center
  servers.
\newblock In {\em e-Energy}. ACM, 2014.

\bibitem{fullThisPaper}
Jordi~Arjona Aroca, Antonio~Fern\'andez Anta, Miguel~A. Mosteiro, Christopher
  Thraves, and Lin Wang.
\newblock Power-efficient assignment of virtual machines to physical machines.
\newblock arXiv:1304.7121v2 [cs.DS] \\(\url{http://arxiv.org/abs/1304.7121}),
  2013.

\bibitem{PruhsSpeedScaling}
Nikhil Bansal, Ho-Leung Chan, and Kirk Pruhs.
\newblock Speed scaling with an arbitrary power function.
\newblock In {\em Proc. of 20-th Annual ACM-SIAM Symposium on Discrete
  Algorithms}, pages 693--701, 2009.

\bibitem{DBLP:conf/medalg/BansalGKNPS12}
Nikhil Bansal, Anupam Gupta, Ravishankar Krishnaswamy, Viswanath Nagarajan,
  Kirk Pruhs, and Cliff Stein.
\newblock Multicast routing for energy minimization using speed scaling.
\newblock In {\em MedAlg}, pages 37--51, 2012.

\bibitem{BRKplacement}
Umesh Bellur, Chetan~S. Rao, and Madhu~Kumar SD.
\newblock Optimal placement algorithms for virtual machines.
\newblock arXiv:1011.5064 (\url{http://arxiv.org/abs/1011.5064}), 2010.

\bibitem{interVMs}
Juan~Felipe Botero, Xavier Hesselbach, Michael Duelli, Daniel Schlosser,
  Andreas Fischer, and Hermann de~Meer.
\newblock Energy efficient virtual network embedding.
\newblock {\em IEEE Communications Letters}, 16(5):756--759, 2012.

\bibitem{CSPChMapReduceCloud}
M.~Cardosa, A.~Singh, H.~Pucha, and A.~Chandra.
\newblock Exploiting spatio-temporal tradeoffs for energy-aware mapreduce in
  the cloud.
\newblock In {\em Cloud Computing (CLOUD), 2011 IEEE International Conference
  on}, pages 251 --258, 2011.

\bibitem{DBLP:conf/ipco/ChakrabartyCKK11}
Deeparnab Chakrabarty, Chandra Chekuri, Sanjeev Khanna, and Nitish Korula.
\newblock Approximability of capacitated network design.
\newblock In {\em IPCO}, pages 78--91, 2011.

\bibitem{DBLP:journals/siamcomp/ChandraW75}
Ashok~K. Chandra and C.~K. Wong.
\newblock Worst-case analysis of a placement algorithm related to storage
  allocation.
\newblock {\em SIAM J. Comput.}, 4(3):249--263, 1975.

\bibitem{cloudconsolidation}
Shih-Chang Chen, Chih-Chun Lee, Hsi-Ya Chang, Kuan-Chou Lai, Kuan-Ching Li, and
  Chunming Rong.
\newblock Energy-aware task consolidation technique for cloud computing.
\newblock In {\em Proceedings of the IEEE Third International Conference on
  Cloud Computing Technology and Science}, pages 115--121, 2011.

\bibitem{DBLP:journals/jacm/CodyC76}
R.~A. Cody and Edward G.~Coffman Jr.
\newblock Record allocation for minimizing expected retrieval costs on
  drum-like storage devices.
\newblock {\em J. ACM}, 23(1):103--115, 1976.

\bibitem{mapreduce}
Jeffrey Dean and Sanjay Ghemawat.
\newblock Mapreduce: simplified data processing on large clusters.
\newblock {\em Commun. ACM}, 51(1):107--113, 2008.

\bibitem{epstein2004}
Leah Epstein and Jiri Sgall.
\newblock Approximation schemes for scheduling on uniformly related and
  identical parallel machines.
\newblock {\em Algorithmica}, 39(1):43--57, 2004.

\bibitem{DBLP:conf/waoa/2012}
Thomas Erlebach and Giuseppe Persiano, editors.
\newblock {\em Approximation and Online Algorithms - 10th International
  Workshop, WAOA 2012, Ljubljana, Slovenia, September 13-14, 2012, Revised
  Selected Papers}, volume 7846 of {\em Lecture Notes in Computer Science}.
  Springer, 2013.

\bibitem{binpackingapprox}
W.~Fernandez de~la Vega and G.S. Lueker.
\newblock Bin packing can be solved within $1 + \epsilon$ in linear time.
\newblock {\em Combinatorica}, 1(4):349--355, 1981.

\bibitem{gareyjohnson}
Michael~R. Garey and David~S. Johnson.
\newblock {\em {Computers and Intractability}}.
\newblock W.H. Freeman and Company, 1979.

\bibitem{PruhsPrimalDual}
Anupam Gupta, Ravishankar Krishnaswamy, and Kirk Pruhs.
\newblock Online primal-dual for non-linear optimization with applications to
  speed scaling.
\newblock In Erlebach and Persiano \cite{DBLP:conf/waoa/2012}, pages 173--186.

\bibitem{PruhsGralCostFunc}
Sungjin Im, Benjamin Moseley, and Kirk Pruhs.
\newblock Online scheduling with general cost functions.
\newblock In {\em Proc. of 23-rd Annual ACM-SIAM Symposium on Discrete
  Algorithms}, pages 1254--1265, 2012.

\bibitem{JBcloud}
R.~Jansen and P.R. Brenner.
\newblock Energy efficient virtual machine allocation in the cloud.
\newblock In {\em Green Computing Conference and Workshops (IGCC), 2011
  International}, pages 1--8, 2011.

\bibitem{Kusic_Kephart-2009}
Dara Kusic, Jeffrey~O. Kephart, James~E. Hanson, Nagarajan Kandasamy, and
  Guofei Jiang.
\newblock Power and performance management of virtualized computing
  environments via lookahead control.
\newblock {\em Cluster Computing}, 12(1):1--15, 2009.

\bibitem{KDRassign}
Ning Liu, Ziqian Dong, and Roberto Rojas-Cessa.
\newblock Task and server assignment for reduction of energy consumption in
  datacenters.
\newblock In {\em Proceedings of the IEEE 11-th International Symposium on
  Network Computing and Applications}, pages 171--174, 2012.

\bibitem{MKMfaultTolerant}
F.~Machida, M.~Kawato, and Y.~Maeno.
\newblock Redundant virtual machine placement for fault-tolerant consolidated
  server clusters.
\newblock In {\em Network Operations and Management Symposium (NOMS), 2010
  IEEE}, pages 32 --39, 2010.

\bibitem{MNTevolutionary}
C.C.T. Mark, D.~Niyato, and Tham Chen-Khong.
\newblock Evolutionary optimal virtual machine placement and demand forecaster
  for cloud computing.
\newblock In {\em Advanced Information Networking and Applications (AINA), 2011
  IEEE International Conference on}, pages 348 --355, 2011.

\bibitem{MFDcloud}
K.~Mills, J.~Filliben, and C.~Dabrowski.
\newblock Comparing vm-placement algorithms for on-demand clouds.
\newblock In {\em Proceedings of the IEEE Third International Conference on
  Cloud Computing Technology and Science}, pages 91--98, 2011.

\bibitem{MSvector}
M.~Mishra and A.~Sahoo.
\newblock On theory of vm placement: Anomalies in existing methodologies and
  their mitigation using a novel vector based approach.
\newblock In {\em Cloud Computing (CLOUD), 2011 IEEE International Conference
  on}, pages 275 --282, 2011.

\bibitem{Nathuji_Schwan-2007}
Ripal Nathuji and Karsten Schwan.
\newblock Virtualpower: coordinated power management in virtualized enterprise
  systems.
\newblock In {\em SOSP}, pages 265--278, 2007.

\bibitem{NDMautonomic}
Hien Nguyen~Van, Frederic Dang~Tran, and Jean-Marc Menaud.
\newblock Autonomic virtual resource management for service hosting platforms.
\newblock In {\em Proceedings of the 2009 ICSE Workshop on Software Engineering
  Challenges of Cloud Computing}, CLOUD '09, pages 1--8. IEEE Computer Society,
  2009.

\bibitem{SKZconsolidation}
Shekhar Srikantaiah, Aman Kansal, and Feng Zhao.
\newblock Energy aware consolidation for cloud computing.
\newblock In {\em Proceedings of the 2008 conference on Power aware computing
  and systems}, HotPower'08, pages 10--10. USENIX Association, 2008.

\bibitem{cloudheuristics}
R.~Van~den Bossche, K.~Vanmechelen, and J.~Broeckhove.
\newblock Cost-efficient scheduling heuristics for deadline constrained
  workloads on hybrid clouds.
\newblock In {\em Proceedings of the IEEE Third International Conference on
  Cloud Computing Technology and Science}, pages 320--327, 2011.

\bibitem{citeulike:556492}
Vijay~V. Vazirani.
\newblock {\em Approximation Algorithms}.
\newblock Springer, March 2004.

\bibitem{VLRPPGappCentric}
H.~Viswanathan, E.K. Lee, I.~Rodero, D.~Pompili, M.~Parashar, and M.~Gamell.
\newblock Energy-aware application-centric vm allocation for hpc workloads.
\newblock In {\em Parallel and Distributed Processing Workshops and Phd Forum
  (IPDPSW), 2011 IEEE International Symposium on}, pages 890 --897, 2011.

\bibitem{XFmultiobjective}
Jing Xu and Jos{\'e} Fortes.
\newblock A multi-objective approach to virtual machine management in
  datacenters.
\newblock In {\em Proceedings of the 8th ACM international conference on
  Autonomic computing}, ICAC '11, pages 225--234. ACM, 2011.

\end{thebibliography}

\end{document}